\documentclass[12pt,leqno]{amsart}

\usepackage{amsmath,amssymb,amscd,amsthm}
\usepackage{latexsym}
\usepackage{graphicx}

\headheight=8pt
\topmargin=0pt
\textheight=624pt
\textwidth=432pt
\oddsidemargin=18pt
\evensidemargin=18pt

\addtolength{\textwidth}{1cm}
\addtolength{\hoffset}{-1cm}

\newtheorem{theorem}{Theorem}
\newtheorem{lemma}{Lemma}
\newtheorem{proposition}{Proposition}

\newtheorem{corollary}{Corollary}

\newtheorem{claim}{Claim}



\newcommand{\f}[2]{\frac{#1}{#2}}
\newcommand{\dpr}[2]{\langle #1,#2 \rangle}


\newcommand{\al}{\alpha}

\newcommand{\De}{\Delta}

\newcommand{\La}{\Lambda}


\newcommand{\rone}{\mathbf R^1}
\newcommand{\rtwo}{\mathbf R^2}



\newcommand{\cf}{\mathcal F}


\newcommand{\p}{\partial}

\newcommand{\sgn}{\mathrm{sgn}}


\newcommand{\beq}{\begin{equation}}
\newcommand{\eeq}{\end{equation}}
\newcommand{\beqna}{\begin{eqnarray*}}
\newcommand{\eeqna}{\end{eqnarray*}}
\newcommand{\beqn}{\begin{equation*}}
\newcommand{\eeqn}{\end{equation*}}
\newcommand{\bp}{\begin{proof}}
\newcommand{\ep}{\end{proof}}
\newcommand{\bprop}{\begin{proposition}}
\newcommand{\eprop}{\end{proposition}}
\newcommand{\bt}{\begin{theorem}}
\newcommand{\et}{\end{theorem}}
\newcommand{\bex}{\begin{Example}}
\newcommand{\eex}{\end{Example}}
\newcommand{\bc}{\begin{corollary}}
\newcommand{\ec}{\end{corollary}}
\newcommand{\bcl}{\begin{claim}}
\newcommand{\ecl}{\end{claim}}
\newcommand{\bl}{\begin{lemma}}
\newcommand{\el}{\end{lemma}}

\usepackage{color}

\begin{document}

\title
[Mass  in Mass system of granular chains]
{Traveling waves for  the Mass in Mass model of granular chains}

\author{Panayotis G. Kevrekidis}

\author{Atanas G. Stefanov}

\author{Haitao Xu}

\address{Panayotis G. Kevrekidis\\
Lederle Graduate Research Tower\\
Department of Mathematics and  Statistics\\
University of Massachusetts\\
Amherst, MA 01003 \\
and Center for Nonlinear Studies and Theoretical Division \\
Los Alamos National Laboratory \\
Los Alamos, NM 87544}

\address{Atanas G.  Stefanov\\
Department of Mathematics \\
University of Kansas\\
1460 Jayhawk Blvd\\ Lawrence, KS 66045--7523}
\email{stefanov@ku.edu}

\address{Haitao Xu\\
Lederle Graduate Research Tower\\
Department of Mathematics and  Statistics\\
University of Massachusetts\\
Amherst, MA 01003}

\email{kevrekid@math.umass.edu}

\thanks{Stefanov's research is supported in part by
 NSF-DMS 1313107.  Kevrekidis
acknowledges support from the National Science Foundation under
grant  DMS-1312856, from ERC and FP7-People under grant 605096,
from the US-AFOSR under grant FA9550-12-10332, and from the Binational
(US-Israel) Science Foundation through grant 2010239.
P.G.K.'s work at Los Alamos is supported in part
by the U.S. Department of Energy. }
\date{\today}

\subjclass[2000]{37L60, 35C15, 35Q51}

\keywords{Solitary Waves; Granular Chains; Nonlinear Lattices;
Traveling Waves; Calculus of Variations}

\begin{abstract}
In the present work, we consider the mass in mass (or mass with mass)
system of granular chains, namely a granular chain involving additionally
an internal resonator. For these chains, we rigorously establish that
under suitable ``anti-resonance'' conditions connecting the mass of
the resonator and the speed of the wave, bell-shaped traveling wave
solutions continue to exist in the system, in a way reminiscent of the
results proven for the standard granular chain of elastic Hertzian
contacts. We also numerically touch upon settings where
the conditions do not hold, illustrating, in line also with recent
experimental work, that non-monotonic waves bearing non-vanishing
tails may exist in the latter case.
\end{abstract}

\maketitle

\section{Introduction}

A topic that has been progressively gaining more attention over the last
two decades within the nonlinear dynamical systems of the famous
Fermi-Pasta-Ulam type~\cite{FPU}, concerns the study of the so-called
granular crystals~\cite{Nesterenko2001,Sen2008,pgk_review,Theocharis_rev}.
The latter consist of chains of elastically interacting
(through the so-called Hertzian contacts) beads that are not only
very experimentally accessible, but also extensively tunable and
controllable, as regards their materials, geometry, heterogeneity, etc.
Another reason for the considerable appeal of such simple lattice
structure, which can be tailored to be one-, two- or even three-dimensional,
is the wealth of nonlinear excitations that have arisen in such settings,
and which include robust traveling waves, bright and dark breather
structures, and shock waves among others, as summarized in the above
reviews. Finally, yet another element promoting the interest in such
setups concerns their potential relevance to a broad range
of applications such as actuating devices \cite{dev08}, acoustic lenses~\cite{Spadoni}, mechanical diodes~\cite{china1,china2,Nature11},
logic gates~\cite{Feng_transistor2014} and sound
scramblers~\cite{dar05,Nesterenko2005}.

Arguably, the most significant and well-studied excitation, the principal
workhorse around which many theoretical analyses and experimental results
have been centered in such granular systems is the traveling wave.
This structure was originally established in the pioneering
work of Nesterenko in~\cite{Nesterenko1983} (see also
the review of~\cite{Nesterenko2001}). While it was originally
believed to be genuinely compact, asymptotic analysis illustrated
its doubly exponential nature (in the granular system without
precompression; in the presence of precompression the decay
of the wave becomes exponential)~\cite{chatter,pikovsky}.
>From a mathematically rigorous perspective, the work of~\cite{fw}
enabled as a special case example application the proof of existence
of such a traveling wave in the work of~\cite{mackay}, while later
the framework of~\cite{pego} (earlier also discussed in the
physical/computational
literature by~\cite{mertens}) enabled the rigorous identification
of the bell-shaped nature of the wave without~\cite{SteKev}
or with~\cite{SteKev2} precompression.

More recently, variants of the standard granular system in which
internal oscillators or resonators are present in each of the lattice
nodes (i.e., for each of the beads) have been proposed theoretically
and some of them have also been realized experimentally. Arguably,
one of the earliest examples of this type (which, however, has not
yet been experimentally implemented, to the best of our
knowledge) is the so-called cradle
system~\cite{ref28,ref29}, where a local, linear oscillator was
added to each bead, enabling the observation of some intriguing
dynamical features. More recently, another variant of the granular
chain, namely the locally resonant granular crystals or so-called
mass-in-mass or mass-with-mass systems have been proposed and
experimentally realized, respectively in~\cite{luca} and~\cite{gatz}.
The former setting involves an internal resonator within the chain
(e.g. a second bead embedded in the principal one), while in the
latter case, the resonator is external to the bead (see e.g. the prototype
also put forth in~\cite{annavain}).

The recent experiment of~\cite{yang} is expected to provide a significant
boost to studies along this direction, as it offered a different
so-called woodpile configuration
constructed out of orthogonally-stacked  rods
(with every second rod aligned, in this
alternating 0-90 degree configuration) for which it was
demonstrated that the MiM/MwM description of a granular chain with
an internal resonator can be tuned to be the
relevant one~\cite{yang2}. In fact, such a system allows,
depending on the parametric regime and the vibrational modes
of the rods, to create settings where controllably two- or
more internal resonator modes are relevant in the mathematical
model description. Moreover, this experiment reported the observation
of traveling waves for this system with persistent tails.

The focus of the present study is to provide a rigorous analysis
of the traveling waves in such a MiM/MwM system bearing an internal
resonator. In particular, the surprising finding that our analysis
will rigorously establish concerns the fact that under special
(i.e., non-generic or isolated)
so-called anti-resonance conditions, the traveling waves of this
chain will {\it still} bear a bell-shaped structure, monotonically
decaying on either side. That is to say, for such isolated combinations
of the resonator mass/wave speed, the tails of the wave will be absent.
On the contrary, the generic setting will, in fact, involve such a
resonant excitation through the passing the wave of the internal resonator
and as such, it will lead to the formation of the weakly nonlocal
solitary waves (nanoptera) reported in~\cite{yang}. Our presentation
of these results will be structured as follows. In section 2,
we will provide the general setup of the problem and an integral
(single component) reformulation thereof. In section 3, the main
result will be presented and proved. Finally, in section 4, some
supporting numerical computations will be provided in order to corroborate
the relevant results.

\section{Analysis of the problem: the model and its integral reformulation}

Following the earlier MiM/MwM studies~\cite{luca,gatz,annavain,yang,yang2},
we consider the prototypical mathematical model of the form:
\begin{equation}
\label{10}
\left|\begin{array}{l}
\p_{tt} X_i=[(X_{i-1}-X_i)^p_+- (X_{i}-X_{i+1})^p_+]+\tilde{k}(x_i-X_i)\\
\nu \p_{tt} x_i=-\tilde{k}(x_i-X_i)
\end{array}\right.
\end{equation}
 where $p>1$.
Here $X_i$ represent the displacements of the granular beads and
$x_i$ those of the internal (or external) resonators. $\tilde{k}$
corresponds to the (normalized) elastic constant of the bead-resonator
coupling, while $\nu$ is the (again, normalized to the mass of
the beads) mass of the resonator particles.
In the strain variables $Y_i:=X_{i-1}-X_i$, $y_i=x_{i-1}-x_i$, the equations take the form
 \begin{equation}
\label{20}
\left|\begin{array}{l}
\p_{tt} Y_i= (Y_{i+1}^p)_+ -2 (Y_i)^p_+ + (Y_{i-1})^p_+ + \tilde{k}(y_i-Y_i)\\
\nu \p_{tt} y_i=-\tilde{k}(y_i-Y_i)
\end{array}\right.
\end{equation}

 We first transform the problem to a setting which allows us to use the methods of calculus of variations. This is similar to the approach that we took in
the earlier works \cite{SteKev,SteKev2} to treat the case of monomer chains
(in the absence of resonators).

\subsection{Reduction to a single equation for $\Phi$}
We are now looking for traveling wave solutions of \eqref{20}  in the form
$Y_i(t)=\Phi(i-ct)$ and $y_i(t)=\Psi(i-c t)$, where we assume that $\Phi$ will be a positive function and we set hereafter for simplicity $\tilde{k}=1$.
 Plugging this ansatz in the system \eqref{20} yields
the following system of advance-delay differential equations
 \begin{equation}
\label{30}
\left|\begin{array}{l}
c^2 \Phi''= \De_{discr.}(\Phi^p) +  (\Psi-\Phi)\\
\nu c^2 \Psi''=- (\Psi-\Phi)
\end{array}\right.
\end{equation}
where, we have introduced the discrete Laplacian
$
\De_{disc} f(x)= f(x+1)-2 f(x)+f(x-1).
$

In order to restate the problem in its equivalent Fourier variable form,  we introduce the Fourier transform and its inverse as
\begin{eqnarray*}
 \hat{f}(\xi)= \int_{-\infty}^\infty f(x) e^{- 2\pi  i x \xi} dx;  \ \
 f(x)=    \int_{-\infty}^\infty \hat{f}(\xi) e^{ 2\pi i x \xi} d\xi
\end{eqnarray*}
 The second derivative operator $\p_x^2$ has the following  representation
$$
\widehat{\p_x^2 f}(\xi)=- 4\pi^2 \xi^2 \hat{f}(\xi).
$$
whereas
$$
\widehat{\De_{disc} f}(\xi)=-4\sin^2(\pi \xi) \hat{f}(\xi).
$$
Thus, taking Fourier transform in the second equation of \eqref{30} allows us to express
\begin{equation}
\label{40}
\widehat{\Psi}(\xi)= \f{1}{1-4\pi^2\nu c^2 \xi^2} \widehat{\Phi}(\xi).
\end{equation}
Plugging this last formula in the first equation of \eqref{30} then yields
the following equation for $\widehat{\Phi}$
$$
 -4\pi^2 c^2 \xi^2 \widehat{\Phi}(\xi) = -4\sin^2(\pi \xi)\widehat{\Phi^p}(\xi) + \f{4\pi^2\nu   c^2 \xi^2}{1-4\pi^2 \nu c^2 \xi^2}\widehat{\Phi}(\xi).
$$
 Solving for $\widehat{\Phi}$, we obtain
 \begin{equation}
 \label{50}
 \widehat{\Phi}(\xi)=   \f{1-4\pi^2 c^2 \nu \xi^2}{1+\nu-4\pi^2 c^2 \nu \xi^2}  \f{\sin^2(\pi \xi)}{ c^2 \pi^2\xi^2} \widehat{\Phi^p}(\xi)
 \end{equation}
 \subsection{An integral equation for $\Phi$}
With the  assignment $A=A_\nu:=\sqrt{1+\f{1}{\nu}}$, we can represent
 \begin{eqnarray*}
 \f{1-4\pi^2 c^2 \nu \xi^2}{1+\nu-4\pi^2 c^2 \nu \xi^2} =  1 -
 \f{\nu}{1+\nu-4\pi^2 c^2 \nu \xi^2} = 1+ \f{1}{2A}
 \left(\f{1}{2\pi c \xi - A}- \f{1}{2\pi c \xi + A}\right),
 \end{eqnarray*}
which allows us to further rewrite \eqref{50} as follows
\begin{equation}
\label{60}
 \widehat{\Phi}(\xi)= \f{\sin^2(\pi \xi)}{ c^2 \pi^2\xi^2} \widehat{\Phi^p}(\xi)+
 \f{1}{2A}
 \left(\f{1}{2\pi c \xi - A}- \f{1}{2\pi c \xi + A}\right) \f{\sin^2(\pi \xi)}{ c^2 \pi^2\xi^2} \widehat{\Phi^p}(\xi).
 \end{equation}
  Recall that (see \cite{pego} and also \cite{SteKev}, \cite{SteKev2})
 taking the ``tent'' function, $\La (x)=(1-|x|)_+$  or
$$
\La(x)=\left\{\begin{array}{l l}
1-|x| & |x|\leq 1, \\
0 & |x|>1.
\end{array}\right.
$$
we have
 $\hat{\La}(\xi)= \f{\sin^2(\pi \xi)}{\pi^2 \xi^2}$.

 Next, we compute the inverse Fourier transform of $\f{1}{2A}
 \left(\f{1}{2\pi c \xi - A}- \f{1}{2\pi c \xi + A}\right)$. We have
 \begin{eqnarray*}
 \int \f{1}{2\pi c \xi - A} e^{2\pi i \xi x} d\xi = \f{e^{i \f{A}{c} x}}{2\pi c} i \int \f{\sin(z x)}{z} dz= \f{e^{i \f{A}{c} x}}{2 c} i {\rm sgn}(x),
 \end{eqnarray*}
 whence
 $$
 \cf^{-1} \left[\f{1}{2A}
 \left(\f{1}{2\pi c \xi - A}- \f{1}{2\pi c \xi + A}\right)\right]=
 -\f{1}{2 A c} \sin\left(\f{A}{c} x\right) {\rm sgn}(x).
 $$
 Thus, taking inverse Fourier transform in \eqref{60}, we obtain the
analogous to~\cite{pego} (see also~\cite{mertens})
representation for the solution of the form:
 \begin{equation}
 \label{70}
 c^2 \Phi=   \La *  \Phi^p - \f{1}{2 A c } \sin\left(\f{A}{c} x\right)
{\rm sgn}(x)*\La* \Phi^p.
 \end{equation}
The next task is to compute the kernel
 $$
- \f{1}{2 A c } \sin\left(\f{A}{c} x\right) {\rm sgn}(x)*\La.
 $$
 Clearly, this is an even function, being the convolution of two even functions. Thus, we only need to compute it for $x>0$ and then we can take an even extension across zero. A direct computation shows
 $$
 - \f{1}{2 A c } \sin\left(\f{A}{c} x\right) {\rm sgn}(x)*\La=  \left\{
  \begin{array}{ll}
  \f{x-1}{A^2}+ \f{c}{A^3}\left( \sin(\f{A}{c}) \cos(\f{A}{c} x) - \sin(\f{A}{c}x)\right)&  x\in (0,1). \\
  \f{c \sin\left(\f{A x}{c}\right)(\cos\left(\f{A}{c} \right) - 1)}{A^3} & x\geq 1
  \end{array}
  \right.
 $$
 At this point, we introduce a few more notations in order to rewrite \eqref{70} in more compact form. Namely, let
 $$
\mu:=\frac{A}{c}>0,\ \   G(x)=G_{\mu, A}(x):= \left\{
  \begin{array}{ll}
   \f{ \sin(\mu)\cos(\mu x)-\sin(\mu x) \sgn(x)}{\mu}&  x\in (-1,1). \\
   \f{ \sin\left(\mu x\right)(\cos(\mu)-1)}{\mu} \sgn(x) &  |x|\geq 1.
  \end{array}
  \right.
 $$
{\it Note that the function $G$ is compactly supported only if $\mu=2n \pi$.}

 We can now rewrite \eqref{70} in the form
 \begin{equation}
 \label{110}
 A^2 c^2 \Phi= (A^2-1) \La * \Phi^p  +  G*\Phi^p.
 \end{equation}
For positive solutions of \eqref{110}, we can take the transformation $Z=\Phi^p$, which leads us to
 \begin{equation}
 \label{120}
 A^2 c^2 Z^{1/p}= (A^2-1) \La * Z  + G*Z=((A^2-1)\La+G)*Z.
 \end{equation}
 Denote
 $$
 K(x):=(A^2-1)\La+G(x).
 $$
 Note that $K=K(A, c;x)$ and the problem \eqref{130} now reads
 \begin{equation}
 \label{140}
 A^2 c^2 Z^{1/p}= K_{A,c}*Z.
 \end{equation}
 \subsection{The anti-resonance condition}
Hereafter, we restrict ourselves to the case of compactly supported kernel $K$ in the convolution problem \eqref{120}. That is, we require
\begin{equation}
\label{130}
\mu=2 n \pi, n\in {\mathbf N}
\end{equation}
 in order to achieve that the function $G$ (and hence $K$) is supported in $(-1,1)$. We refer to \eqref{130}
  as the {\bf anti-resonance condition} for the parameters. Notice that
this is a condition that physically
connects the mass of the resonator (or effectively the ratio of its mass
to that of the principal bead) to the speed of the wave.
We can then obtain the following
  \begin{lemma}
  \label{le:15}
  Let $n\in {\mathbf N}, \mu=2\pi n$. There exists an irrational number $A_0=1.10328...$, so that for $A\geq A_0$, the kernel
  $$
  K_{A,c}(x)=(A^2-1) \La(x)-\f{\sin(\mu x)}{ \mu} sgn(x)\chi_{(-1,1)}(x),
  $$
  is positive in $(0,1)$. For $A>\sqrt{2}$, the function $K$ is   decreasing in $(0,1)$.
  \end{lemma}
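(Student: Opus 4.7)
My plan is to restrict $K = K_{A,c}$ to $(0,1)$, where $\La(x) = 1-x$ and $\sgn(x) = 1$, so
\begin{equation*}
K(x) = (A^2-1)(1-x) - \f{\sin(\mu x)}{\mu}, \qquad x \in (0,1).
\end{equation*}
The anti-resonance condition $\mu = 2\pi n$ forces $\sin(\mu)=0$, hence $K(1)=0$, while $K(0)=A^2-1>0$. The monotonicity claim is the easy half: $K'(x) = -(A^2-1)-\cos(\mu x)$, so when $A>\sqrt{2}$ we have $K'(x)\leq 2-A^2<0$ on all of $(0,1)$, and $K$ is strictly decreasing.

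For the positivity claim I would recast $K(x)\geq 0$ on $(0,1)$ as
\begin{equation*}
A^2 - 1 \geq \f{\sin(\mu x)}{\mu(1-x)},
\end{equation*}
which is automatic whenever $\sin(\mu x)\leq 0$. Thus $K\geq 0$ on $(0,1)$ iff $A^2-1\geq \supl_{x\in(0,1)} \sin(\mu x)/(\mu(1-x))$. Substituting $u = \mu(1-x) \in (0, 2\pi n)$ and using $\sin(\mu x)=\sin(2\pi n - u)=-\sin u$, the supremum becomes $-\inf_{u\in(0,2\pi n)} \sin(u)/u$, which I will show equals $-\inf_{u>0}\sin(u)/u$ because the relevant minimizer lies in $(0,2\pi)$.

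I would then determine this infimum through a standard critical-point analysis: $(d/du)[\sin u/u]=0$ yields $\tan u = u$, whose positive solutions form an increasing sequence $u_0<u_1<u_2<\cdots$ with $u_0\approx 4.4934\in(\pi,3\pi/2)$. At each $u_k$, $\sin(u_k)/u_k=\cos(u_k)$, and the identity $\sec^2 u_k = 1+u_k^2$ yields the envelope $|\cos u_k|=1/\sqrt{1+u_k^2}$, which is strictly decreasing in $k$. The signs of $\cos u_k$ alternate (as the $u_k$ lie alternately in the third and first quadrants modulo $2\pi$), so $u_0,u_2,\ldots$ are local minima and $u_1,u_3,\ldots$ local maxima of $\sin u/u$; combined with the envelope bound, the global infimum on $(0,\infty)$ is therefore $\cos(u_0)$. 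Consequently
\begin{equation*}
A_0 := \sqrt{1-\cos(u_0)} \approx 1.10328,
\end{equation*}
and $K\geq 0$ on $(0,1)$ precisely when $A\geq A_0$, with equality (at $A=A_0$) only at the tangency point $x_0 = 1 - u_0/\mu$.

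The main technical point is the global character of the infimum --- combining the sign-alternation with the envelope $1/\sqrt{1+u_k^2}$ is what rules out any competing minimizer. Irrationality of $A_0$ then follows from the Lindemann--Hermite theorem: $u_0$ must be transcendental (the trigonometric functions of a nonzero algebraic number are transcendental, so $\tan u = u$ cannot hold at an algebraic $u\neq 0$), and if $\cos(u_0)$ were algebraic then $\sin(u_0)=u_0\cos(u_0)$ would be transcendental while $\sin^2(u_0)=1-\cos^2(u_0)$ is algebraic, a contradiction. Hence $A_0^2=1-\cos(u_0)$ is irrational, and so is $A_0$.
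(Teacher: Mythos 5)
Your proposal is correct and follows essentially the same route as the paper: the identical derivative bound $K'(x)=-(A^2-1)-\cos(\mu x)\leq 2-A^2$ for the monotonicity claim, and the same reduction of positivity on $(0,1)$, via the substitution $u=\mu(1-x)$ and $\sin(\mu x)=-\sin u$ (the paper writes this as $\sin(2\pi n(1-x))$), to the condition $A^2-1\geq -\inf_{u>0}\sin u/u$, giving $A_0=\sqrt{1+a_0}$ with $a_0=0.2172\ldots$. The only difference is that you justify two facts the paper merely asserts numerically or implicitly: the identification of the infimum as $\cos u_0$ with $\tan u_0=u_0$ (via the envelope $1/\sqrt{1+u_k^2}$ and sign alternation), and the irrationality of $A_0$ via Lindemann, both of which are sound additions rather than deviations.
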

  \begin{proof}
  Since $K$ is obviously an even function, it suffices to consider the case $0\leq x\leq 1$.
  Taking the derivative of $K$, we see that
$$
K'(x)=-\cos(\mu x)-(A^2-1) \leq 1-(A^2-1)\leq 2-A^2\leq 0
$$
so long as $A\geq \sqrt{2}$. Thus, $K$ is decreasing in $(0,1)$ for $A\geq \sqrt{2}$.

  Next, we study  the positivity of $K$. Again, it suffices to consider $x\in (0,1)$.
We have
  $$
  K(x)=(A^2-1) (1-x) - \f{\sin(\mu x)}{\mu}=(1-x)\left[A^2-1-\f{\sin(\mu x)}{\mu (1-x)}\right].
  $$
  Since $\mu=2n\pi$, we can     rewrite the last expression as follows
  $$
  K(x)=(1-x) \left[A^2-1+\f{\sin(2\pi n(1-x))}{2\pi n (1-x)}\right].
  $$
  Now, since
$$
-a_0=\inf_{z\in (0,2\pi)}  \f{\sin z}{z}=\inf_{z\in (0,\infty)}  \f{\sin z}{z}=-0.217234....,
$$
we conclude that $K(x)\geq 0$, provided $A\geq A_0=\sqrt{1+a_0}=1.10328...$

  \end{proof}

We can now proceed to establish our main result.

\section{Main result: Bell-shaped Traveling Waves persist Under
Suitable Anti-Resonance Conditions}
We have the following main result.
\begin{theorem}
\label{theo:main}
Let $\frac{A}{c}=2 n \pi$ for some integer $n=1, 2, \ldots$.

Then, for $A\geq \sqrt{2}$, the equation \eqref{140} (and hence \eqref{50}) has bell-shaped solution $\Phi$. That is, there is $\Phi:{\mathbf R}\to {\mathbf R}_+$ a positive, even, $C^\infty$
smooth function, so that $\Phi$ is decaying in $(0, \infty)$. In addition, $\Phi$ is doubly exponentially decaying, just as the travelling waves in the classical monomer case.

For $\sqrt{2}>A\geq A_0=1.10328...$, the problem \eqref{140} (and hence \eqref{50}) has a positive solution $\Phi$, which  we cannot guarantee to be
bell-shaped~\footnote{The latter feature will, however, be illustrated
in our case example numerical computations of the next section.}.
\end{theorem}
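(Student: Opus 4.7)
The plan is to recast \eqref{140} as the Euler--Lagrange equation of a constrained variational problem and then apply the Riesz rearrangement machinery used in \cite{SteKev,SteKev2} for the classical monomer chain. Setting $q=(p+1)/p$, define
$$
\mathcal{J}(Z)=\int_{{\mathbf R}} Z(x)\,(K_{A,c}*Z)(x)\,dx
$$
on the set $\mathcal{M}=\{Z\in L^q({\mathbf R}):Z\geq 0,\ \|Z\|_{L^q}=1\}$. By Lemma \ref{le:15} the kernel $K=K_{A,c}$ is non-negative on all of ${\mathbf R}$, even, and compactly supported in $[-1,1]$, so $K\in L^1\cap L^\infty$. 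Young's inequality yields $\mathcal{J}(Z)\leq\|K\|_{L^1}\|Z\|_{L^q}^2$, while testing with $Z$ proportional to $\chi_{[-\delta,\delta]}$ for small $\delta>0$ (where $K$ is strictly positive by Lemma \ref{le:15}) shows $m:=\sup_{\mathcal M}\mathcal{J}>0$. A first variation computation then gives that any maximizer $Z_0$ satisfies $K*Z_0=\lambda Z_0^{1/p}$ for some $\lambda>0$, and a suitable rescaling $\Phi=\alpha Z_0^{1/p}$ produces a positive solution of \eqref{140}.

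The central analytic step is the existence of a maximizer. For $A\geq\sqrt 2$, Lemma \ref{le:15} gives $K=K^\ast$ (symmetric decreasing), so the Riesz rearrangement inequality gives $\mathcal{J}(Z)\leq\mathcal{J}(Z^\ast)$ with $\|Z^\ast\|_{L^q}=\|Z\|_{L^q}$. Any maximizing sequence may therefore be taken symmetric decreasing; such sequences are bounded at $0$ and decay uniformly at infinity from the $L^q$ constraint, so Helly's selection principle produces an a.e. pointwise limit $Z_0$. The compact support of $K$ combined with Young's inequality allows passage to the limit $\mathcal{J}(Z_n)\to\mathcal{J}(Z_0)$, and $m>0$ forces $Z_0\not\equiv 0$, so $Z_0\in\mathcal M$ attains the maximum. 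In the range $A_0\leq A<\sqrt 2$ the monotonicity of $K$ is lost but $K\geq 0$ remains, and here I would obtain a non-negative (but not necessarily monotone) maximizer by Lions-style concentration-compactness on non-negative $L^q$-sequences, with vanishing ruled out by $m>0$ and dichotomy by the quadratic homogeneity of $\mathcal{J}$.

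Once a maximizer is in hand, bell-shape, smoothness and decay are read off from \eqref{140}. For $A\geq\sqrt 2$ the identity $Z_0=Z_0^\ast$ transfers to $\Phi$, giving evenness and monotone decay on $(0,\infty)$; strict positivity on all of ${\mathbf R}$ follows from $A^2c^2\Phi=K*\Phi^p$ together with the positivity of $K$ on $(-1,1)$. Smoothness is an iterative bootstrap: $K\in L^\infty$ with compact support makes $K*\Phi^p$ continuous as soon as $\Phi\in L^q$, hence $\Phi$ is continuous, and differentiating the convolution representation away from the finitely many singular points $\{0,\pm 1\}$ of $K$ gains one derivative at each step. The doubly exponential decay then follows as in \cite{SteKev}: the compact support of $K$ together with the monotonicity of $\Phi$ yields a recursion of the form $\Phi(x+1)\leq C\,\Phi(x-1)^p$ for large $x$, which iterates to $\Phi(x)\leq C\exp(-\gamma\,p^{x/2})$.

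The principal obstacle is precisely the absence of monotonicity of $K$ in the range $A_0\leq A<\sqrt 2$: without $K=K^\ast$ the Riesz rearrangement inequality no longer forces maximizers to be symmetric decreasing, the clean symmetrization-based existence proof collapses, and no symmetry or monotonicity of $\Phi$ can be extracted. This is exactly why the theorem states the bell-shape conclusion only for $A\geq\sqrt 2$ and the weaker positive-solution statement for the intermediate range.
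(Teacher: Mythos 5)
Your proposal follows essentially the same route as the paper: the identical constrained maximization of $\langle K*Z,Z\rangle$ over $\|Z\|_{L^q}=1$ with $q=1+\frac{1}{p}$, the Riesz rearrangement inequality to enforce bell-shapedness when $A\geq\sqrt{2}$, Lions concentration-compactness (ruling out vanishing and dichotomy via $q<2$ and strict subadditivity $\alpha^2+(1-\alpha)^2<1$) for $A_0\leq A<\sqrt{2}$, the same Euler--Lagrange and rescaling step to recover \eqref{140}, and the same positivity, bootstrap-smoothness and iterated $\tilde{\Phi}(x)\leq\tilde{\Phi}^p(x-1)$ arguments for $C^\infty$ regularity and doubly exponential decay. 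The only cosmetic deviation is in the symmetric-decreasing case, where you invoke Helly's selection principle while the paper uses weak $L^q$ convergence together with the Riesz--Tamarkin compactness of $\{K*Z_n\}$ in $L^{q'}$; your side remark that such sequences are ``bounded at $0$'' is neither justified nor needed, since the $|x|^{-1/q}$ tail bound and compactness away from the origin suffice.
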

We perform the proof in several steps, under the different assumptions on $A$. Following the idea in \cite{SteKev, SteKev2}, we let $q=1+\f{1}{p}$ and we set up the following maximization problem
\begin{equation}
\label{200}
\left\{
\begin{array}{l}
\dpr{K*z}{z}\to \max \\
\int_{\rone} |z(x)|^{q} dx=1
\end{array}
\right.
\end{equation}
Note that $q\in (1,2)$.
The plan of actions is as follows:  we first  show that \eqref{200} has a solution with the desired properties. After that,  we derive its Euler-Lagrange equation. This is of course closely related to \eqref{140}, except for the Lagrange multipliers, which need to be adjusted.
\subsection{Existence for the maximizers}
We will show that the following lemma holds.
\begin{lemma}
\label{le:14}
Let $A\geq A_0$. Then, the maximization problem \eqref{200} has a   solution. Moreover, this solution is positive. If in addition  $A\geq \sqrt{2}$, then   it is also bell-shaped.
 \end{lemma}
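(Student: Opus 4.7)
The plan is to apply the direct method of the calculus of variations to the functional $F(z) := \dpr{K*z}{z}$ on the constraint $\{\|z\|_q=1\}$, in the spirit of \cite{SteKev,SteKev2}. First, I would verify that the supremum $\mathcal M$ is finite and strictly positive. Finiteness follows from Young's convolution inequality: defining $r$ by $1/r=2(1-1/q)=2/q'$ gives $r\in(1,\infty)$, and since $K\in L^\infty$ has compact support one has $F(z)\leq \|K*z\|_{q'}\|z\|_q\leq \|K\|_r\|z\|_q^2=\|K\|_r$. Positivity of $\mathcal M$ is obtained by testing $F$ on the admissible function $z=\chi_{[-1/2,1/2]}$: the integrand $K(x-y)$ is nonnegative on $(-1,1)$ by Lemma~\ref{le:15} and continuous with $K(0)=A^2-1>0$, so $F(z)>0$.

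Now pick a maximizing sequence $\{z_n\}$ with $\|z_n\|_q=1$ and $F(z_n)\to \mathcal M$. Using $K\geq 0$ on $(-1,1)$, the bound $F(z_n)\leq F(|z_n|)$ permits the reduction $z_n\geq 0$. In the regime $A\geq \sqrt{2}$, Lemma~\ref{le:15} further says $K$ is even and monotone on $(0,1)$, hence is its own symmetric-decreasing rearrangement; by the Riesz rearrangement inequality (combined with $L^q$-invariance of the rearrangement) I may additionally assume each $z_n$ is symmetric decreasing. This is the decisive reduction: the pointwise bound $z_n(x)\leq\|z_n\|_q (2|x|)^{-1/q}$ forces tightness, and Helly's selection principle extracts a subsequence converging a.e.\ pointwise to a symmetric-decreasing nonnegative $z_*\in L^q$. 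In the broader regime $A_0\leq A<\sqrt 2$, where monotonicity fails and rearrangement is no longer available, I would instead invoke P.-L.\ Lions' concentration-compactness principle applied to the densities $\rho_n=|z_n|^q$. Vanishing is ruled out by $\mathcal M>0$ (since $K$ has compact support, $F$ is a local quantity), and dichotomy is ruled out by the strict superadditivity of $\la\mapsto \mathcal M(\la)=\la^{2/q}\mathcal M(1)$, which is a consequence of $2/q>1$ and hence $\la_1^{2/q}+\la_2^{2/q}<(\la_1+\la_2)^{2/q}$; after translating by a suitable sequence of centers, the sequence becomes tight.

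With $z_n\rightharpoonup z_*$ weakly in $L^q$ together with tightness and pointwise a.e.\ convergence, I would pass to the limit in $F(z_n)$ using the compact support of $K$: the convolutions $K*z_n$ are uniformly bounded and converge pointwise (being tested against $K(x-\cdot)\in L^{q'}$), while tightness localizes the paired integration of $(K*z_n)\,z_n$ against $z_n$ to a compact region where dominated convergence applies. This yields $F(z_*)=\mathcal M$, and homogeneity forces $\|z_*\|_q=1$ (otherwise rescaling $z_*/\|z_*\|_q$ would strictly exceed $\mathcal M$). Hence $z_*$ is a nonnegative maximizer; in the case $A\geq \sqrt 2$, the a.e.\ limit of symmetric-decreasing functions is symmetric decreasing, giving the bell-shaped property. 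Strict positivity everywhere (rather than mere nonnegativity) will be obtained once the Euler-Lagrange equation $K*z_*=\la z_*^{q-1}$ is derived in the next subsection: since $K\geq 0$ and $K\not\equiv 0$ on a neighborhood of $0$, the convolution $K*z_*$ is strictly positive on a neighborhood of the support of $z_*$, and this propagates iteratively to all of $\mathbf R$.

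The main obstacle, as is typical for such variational existence proofs in translation-invariant problems on $\mathbf R$, is upgrading weak $L^q$-convergence to something strong enough to pass to the limit in the bilinear form $F$. This is precisely why the two regimes require different treatments: the rearrangement argument in the case $A\geq\sqrt 2$ (cheaper, and essentially parallel to \cite{SteKev,SteKev2}) versus concentration-compactness in the case $A_0\leq A<\sqrt 2$. The strict superadditivity $\la^{2/q}>\la_1^{2/q}+\la_2^{2/q}$ excluding dichotomy, which is available here only because $q\in(1,2)$, is the key structural input enabling the latter argument.
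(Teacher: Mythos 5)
Your proposal is correct and takes essentially the same route as the paper: Young/H\"older boundedness of $\dpr{K*z}{z}$, reduction to nonnegative maximizing sequences, Lions concentration-compactness with vanishing excluded by locality of the compactly supported kernel together with $q<2$, dichotomy excluded by strict superadditivity (your $\lambda^{2/q}$ formulation is the cleaner version of the paper's $\al^2+(1-\al)^2<1$ computation), and, for $A\ge\sqrt2$, the Riesz rearrangement reduction plus the pointwise decay bound $z(x)\lesssim |x|^{-1/q}$, which the paper likewise notes allows one to bypass Lions' theory entirely. The only differences are presentational: the paper runs concentration compactness for all $A\ge A_0$ and passes to the limit via Riesz--Tamarkin precompactness of $K*z_n$ in $L^{q'}$ paired with weak $L^q$ convergence, whereas you use Helly plus dominated convergence --- note that in the regime $A_0\le A<\sqrt2$ you do not have a.e.\ convergence of $z_n$ itself, so your limit passage should be phrased (as it easily can be) as strong convergence of $K*z_n$ on compact sets paired with weak convergence of $z_n$.
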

First, we show that the quantity $\dpr{K*z}{z}$ is bounded, if $z$ satisfies the constraint. Indeed, by H\"older's and Young's inequality\footnote{Note that since $q=1+\f{1}{p}<2$, we have that $\f{q}{2q-2}>1$.}
$$
|\dpr{K*z}{z}|\leq \|K*z\|_{L^{q'}} \|z\|_{L^q}\leq \|K\|_{L^{\f{q}{2q-2}}} \|z\|_{L^q}^2,
$$
But $K$ is a bounded function with  support in $(-1,1)$, hence $K\in L^{\f{q}{2q-2}}$.
Next, we show that an eventual solution to  \eqref{200} is necessarily positive. Indeed, for any function $z$, we have that the function $w:=|z|$ satisfies the constraint and moreover
$$
\dpr{K*z}{z}=\int \int K(x-y)z(x) z(y) dx dy\leq \int \int K(x-y)w(x) w(y) dx dy=\dpr{K*w}{w}.
$$
Thus, the supremum in \eqref{200} may be taken over all $z\geq 0$. In fact, one can see that if $z$ is not non-negative, then $w=|z|$ provides a bigger value and hence $z$ may not be a solution to \eqref{200}.
Denote
$$
J^{\max}=\sup_{\|z\|_{L^q}=1} \dpr{K*z}{z}.
$$
Clearly, $J^{\max}>0$.
Pick a maximizing sequence, say $z_n: z_n\geq 0$. That is $\|z_n\|_{L^q}=1$ and
$$
\dpr{K*z_n}{z_n}\to J^{\max}.
$$
We will show that (an appropriate translate of) $z_n$ converges strongly (in $L^q$) to a solution $z_0$. To that end,  we apply the concentration compactness
method of Lions, \cite{Lions}. The outcome is that (after we pick a subsequence, which we call again $z_n$), one of three scenarios occur:
\begin{itemize}
\item (tightness) There exists $y_k\in \rone$, so that for each $\epsilon>0$, there is $R=R(\epsilon)$, so that
$$
\int_{y_k-R}^{y_k+R} z_k^q(x) dx >1-\epsilon
$$
\item (vanishing) For every $R>0$, there is
$$
\lim_{k\to \infty} \sup_y \int_{y-R}^{y+R} z_k^q(x) dx = 0.
$$
\item (dichotomy) There exists $\al\in (0,1)$, so that for any $\epsilon>0$, there are $R=R(\epsilon)$ and $R_k\to \infty$, $y_k\in \rone$, so that for all large enough $k$,
$$
\left|\int_{y_k-R}^{y_k+R} z_k^q(x) dx - \al\right|<\epsilon^q, \  \left|\int_{|y-y_k|>R_k} z_k^q(x) dx -(1- \al)\right|<
\epsilon^q
$$
\end{itemize}
We will proceed to show that vanishing and dichotomy may not occur, which will leave us with tightness. In the tightness scenario, we will easily show that a translate of $z_k$ will converge strongly to $z_0$. \\
{\bf Vanishing does not occur:}\\
Assume that it does. Let $\epsilon>0$, $R=10$ and $k_0=k_0(\epsilon)$ is so large that
\begin{equation}
\label{210}
\sup_y  \int_{y-10}^{y+10} z_k^q(x) dx <\epsilon^q.
\end{equation}
for all $k>k_0$. We have for each $y_0\in \rone$,
$$
\int_{y_0-5}^{y_0+5} K*z_k(y) z_k(y) dy\leq \|z_k\|_{L^q(y_0-5,y_0+5)} \| K*z_k\|_{L^{q'}(y_0-5,y_0+5)}.
$$
Since $supp K\subset (-1,1)$, and the integration is over $(y_0-5, y_0+5)$, it follows that
$$
\chi_{(y_0-5, y_0+5)} K*z_k(y) = \chi_{(y_0-5, y_0+5)} K*[z_k\chi_{(y_0-6, y_0+6)}].
$$
We have by Young's inequality that
$$
\| K*z_k\|_{L^{q'}(y_0-5,y_0+5)}\leq \|K\|_{L^{\f{q}{2q-2}}} \|z_k\|_{L^q(y_0-6,y_0+6)}.
$$
It follows that
$$
\int_{y_0-5}^{y_0+5} K*z_k(y) z_k(y) dy\leq C \|z_k\|_{L^q(y_0-10,y_0+10)}^2\leq  C \epsilon^{2-q}
\|z_k\|_{L^q(y_0-10,y_0+10)}^q,
$$
where in the last inequality, we have used \eqref{210}. Summing the last inequality over $y_0={0, \pm 1, \ldots}$ yields
$$
\dpr{K*z_k}{z_k}\leq \sum_{y_0\in {\mathcal Z}} \int_{y_0-5}^{y_0+5} K*z_k(y) z_k(y) dy\leq  C_1 \epsilon^{2-q}  \|z_k\|_{L^q}= C_1 \epsilon^{2-q},
$$
for all large enough $k$ and for all $\epsilon>0$.
On the other hand $\dpr{K*z_k}{z_k}\to J^{\max}$, which will be a contradiction, if we have had selected  $\epsilon$ small enough.  Note that again, we have used $q<2$. Thus, vanishing may not occur.
\\
{\bf Dichotomy  does not occur:}\\
Assume that it does. Let $\epsilon>0$ and denote
$$
z_k^1(y):= z_k(y) \chi_{(y_k-R, y_k+R)}(y); \ \  z_k^2(y)= z_k(y) \chi_{|y-y_k|>R_k}(y),
$$
so that $\|z_k-z_k^1-z_k^2\|_{L^q}=O(\epsilon)$.  From support considerations and the Young's estimates that we have provided earlier, it is clear that for all large enough  $k$
$$
\dpr{K*z_k}{z_k}=\dpr{K*z^1_k}{z^1_k}+ \dpr{K*z^2_k}{z^2_k}+O(\epsilon).
$$
Now, from the definition of $J^{\max}$
$$
\dpr{K*z^1_k}{z^1_k}= \|z^1_k\|_{q}^2\dpr{K*\f{z^1_k}{\|z^1_k\|_{q} }}{\f{z^1_k}{\|z^1_k\|_{q}}}\leq J^{\max} \|z^1_k\|_{q}^2=J^{\max}(\al^2+O(\epsilon^q)).
$$
Similarly,
$$
\dpr{K*z^2_k}{z^2_k}\leq J^{\max}((1-\al)^2+O(\epsilon^q)).
$$
Putting everything together yields
$$
\dpr{K*z_k}{z_k}\leq J^{\max}(\al^2+(1-\al)^2+O(\epsilon^q))+O(\epsilon).
$$
This is again produces a contradiction (with  a judiciously
small choice of $\epsilon$), since \\ $\dpr{K*z_k}{z_k}\to J^{\max}$, $\al^2+(1-\al)^2<1$.   Thus, dichotomy does not occur either.

Thus, tightness is the only alternative. Now, consider $\tilde{z}_n(y):=z_n(y-y_n)$. Note that $\tilde{z}_n$ satisfies the constraint and also $\dpr{K* \tilde{z}_n}{\tilde{z}_n}=\dpr{K*z_n}{z_n}\to J^{\max}$. We have that for all $\epsilon>0$, there exists $R=R(\epsilon)$, so that
\begin{equation}
\label{250}
\int_{-R}^R \tilde{z}_n^q(y) dy>1-\epsilon.
\end{equation}
 for all $n$.
Now, by weak compactness in $L^q$, it follows that $\tilde{z}_n$ has a weakly convergent subsequence (denoted again by $\tilde{z}_n$ for convenience), with a limit say $z_0$.

We now claim that the sequence $\{K*\tilde{z}_n\}_n$ is strongly pre-compact in $L^{q'}$.  Indeed, by the Young's inequality, we have for all $n$
$$
\|K*\tilde{z}_n\|_{W^{1,q'}}\leq \|K*\tilde{z}_n\|_{L^{q'}}+ \|K'*\tilde{z}_n\|_{L^{q'}}\leq
C \|K\|_{W^{1,\f{q}{2q-2}}} \|z_n\|_{L^q}= C \|K\|_{W^{1,\f{q}{2q-2}}},
$$
where by  inspection $K\in  W^{1,\f{q}{2q-2}}$. Also
$$
\int_{|y|>R+1} |K*\tilde{z}_n(y)|^{q'}dy\leq  \|K\|_{L^{\f{q}{2q-2}}}^{q'} \|\tilde{z}_n\|_{L^q(|y|>R)}^{q'}
\leq C \epsilon^{q'/q}.
$$
By the compactness criteria in $L^{q'}$ spaces (i.e. the Riesz-Tamarkin condition), it follows that $\{K*\tilde{z}_n\}_n$ is pre-compact. This, together with the pointwise limit $K*\tilde{z}_n\to K*z_0$, which follows from the weak convergence of $\tilde{z}_k$ implies that for some subsequence
$$
\|K*\tilde{z}_{n_l}-K*z_0\|_{L^{q'}}\to 0.
$$
We now have by H\"older's
\begin{eqnarray*}
|\dpr{K*\tilde{z}_{n_l}}{\tilde{z}_{n_l}}-\dpr{K*z_0}{z_0}| &\leq & |\dpr{K*\tilde{z}_{n_l}-K*z_0}{\tilde{z}_{n_l}}|+
|\dpr{z_0}{K*\tilde{z}_{n_l}-K*z_0}|\\
&\leq & \|K*\tilde{z}_{n_l}-K*z_0\|_{L^{q'}} (\|z_0\|_{L^q}+\|\tilde{z}_{n_l}\|_{L^q})\to 0.
\end{eqnarray*}
Thus,
$$
\dpr{K*z_0}{z_0}=J^{\max}.
$$
On the other hand, by the lower semicontinuity of the norm with respect to the weak convergence
$\|z_0\|_{L^q}\leq 1$. But then $\|z_0\|_{L^q}=1$, since otherwise
$$
J^{\max}=\dpr{K*z_0}{z_0}=\|z_0\|_{L^q}^2  \dpr{K*\f{z_0}{\|z_0\|_{L^q}}}{\f{z_0}{\|z_0\|_{L^q}}}\leq J^{\max} \|z_0\|_{L^q}^2<J^{\max},
$$
a contradiction. Thus, we have shown that the limit $z_0$ is indeed a solution to the maximization problem \eqref{200}.
\subsection{Bell-shapedness of the solution in the case $A\geq \sqrt{2}$}
In this case,  we have that the kernel $K$ is bell-shaped.  We  show now  that the solution $z_0$ is in addition bell-shaped. In order  to explain the setup, we need a few definitions.

For a function $f:\rone\to \rone$, one defines the distribution function
$$
d_f(\al):=meas\{x: |f(x)|>\al\}.
$$
Clearly, the function $d_f(\al)$ is non-increasing, whence one can define its ``inverse'' as follows
$$
f^*(t)=\inf\{s: d_f(s)\leq t\}.
$$
We call $f^*:\rone_+\to \rone_+$ {\it the non-increasing rearrangement} of $f$. Note that the two functions have the same distribution function, that is $d_f(\al)=d_{f^*}(\al)$, so in particular $\|f\|_{L^p}=\|f^*\|_{L^p}: 0< p\leq \infty$. Finally, define the even function
 $f^\#(t)=f^*(2|t|)$, which also satisfies $\|f\|_{L^p}=\|f^\#\|_{L^p}: 0< p\leq \infty$.
 Clearly, a function is bell-shaped if and only if $f^\#=f$.  In this setting, we have the Riesz rearrangement inequality\footnote{The original inequality appeared implicitly in the work  of Riesz.}, \cite{BLM},   , which states
$$
\int_{\rone} \int_{\rone} f(x-y) g(y) h(x) dx dy\leq  \int_{\rone} \int_{\rone} f^\#(x-y) g^\#(y) h^\#(x) dx dy.
$$
Using that fact and since $K$ is bell-shaped, it is clear that in the maximization problem \eqref{200},
we can restrict our attention to bell-shaped entries $z$. Indeed, taking an arbitrary function $z$, note that $z^\#$ would satisfy the constraint $\|z^\#\|_{L^q}=\|z\|_{L^q}=1$ and moreover, by the Riesz rearrangement inequality
$$
\dpr{K*z}{z}=\int_{\rtwo}   K(x-y) z(y) z(x) dy dx\leq  \int_{\rtwo}  K(x-y) z^\#(y) z^\#(x) dx dy=
\dpr{K*z^\#}{z^\#}.
$$
Based on the above formulation, one may derive the existence of $z_0$ the same way as before. Note however that one may completely circumvent the Lions concentration compactness arguments in the bell-shaped case. Indeed, assuming that $z$ is bell-shaped and satisfies the constraint $\|z\|_{L^q}=1$, we have for every $x>0$,
$$
|z(x)|^q x\leq \int_0^x |z(y)|^q dy\leq \int_0^\infty |z(y)|^q dy=1/2.
$$
Hence $|z(x)|\leq 2^{-q} x^{-1/q}$. Thus, any maximizing sequence $\{z_n\}$ of bell-shaped functions will produce a pre-compact in $L^{q'}$ sequence $K*z_n$. Indeed, we have, similarly to the arguments above that $K*z_n\in W^{1,q'}$. In addition, by support considerations for any $R>10$, we have for every $n$
$$
\int_{|x|>R} |K*z_n(x)|^{q'}dx\leq C \int_{|x|>R-1} |z(x)|^{q'} dx\leq C \int_{|x|>R-1} \f{1}{|x|^{q'/q}} dx\leq C R^{1-\f{q'}{q}}\to 0,
$$
as $R\to \infty$, since $q<2<q'$.  Thus, the elementary pointwise bounds obtained from bell-shapedness help us  verify immediately the Riesz-Tamarkin criteria and hence the pre-compactness of $K*z_n$ in $L^{q'}$, without having to resort to the Lions' theory. After that, we finish   by standard arguments as before.
 \subsection{The Euler-Lagrange equation for \eqref{200}}
The next step is to derive the Euler-Lagrange equation that the solution $z_0$ of \eqref{200}  satisfies. We proceed as follows. Take $\epsilon\in \rone$ and a test function $h$. Since $z_0$ is a maximizer for \eqref{200}, it must be that
$$
\dpr{K*\f{z_0+\epsilon h}{\|z_0+\epsilon h\|_{L^q}}}{\f{z_0+\epsilon h}{\|z_0+\epsilon h\|_{L^q}}}\leq J^{\max}
$$
for all $\epsilon,h$. We can rewrite this as follows
\begin{equation}
\label{d:10}
\dpr{K*(z_0+\epsilon h)}{z_0+\epsilon h}\leq J^{\max} \|z_0+\epsilon h\|_{L^q}^2
\end{equation}
Taking Taylor expansions in $\epsilon$, we find
\begin{eqnarray*}
\dpr{K*(z_0+\epsilon h)}{z_0+\epsilon h} &=& \dpr{K*z_0}{z_0}+2\epsilon\dpr{K*z_0}{h}+O(\epsilon^2)=\\
&=&  J^{\max}+2\epsilon\dpr{K*z_0}{h}+O(\epsilon^2), \\
\|z_0+\epsilon h\|_{L^q}^2 &=& (1+\epsilon q \int z_0^{q-1}(x) h(x) dx+O(\epsilon^2))^2=1+2\epsilon q \dpr{z_0^{q-1}}{h}+O(\epsilon^2).
\end{eqnarray*}
Plugging in this in \eqref{d:10}, we obtain
$$
2\dpr{K*z_0}{h}=2 J^{\max} q \dpr{z_0^{q-1}}{h}+O(\epsilon),
$$
which is valid for all $\epsilon,h$. Taking limit as $\epsilon\to 0$, we obtain
$$
\dpr{K*z_0-q z_0^{q-1}}{h}=0.
$$
Since this is satisfied for all test functions $h$ and $q-1=\f{1}{p}$, we obtain
$
K*z_0=q z_0^{\f{1}{p}}.
$
If we now take $z_0=\mu Z$,
$$
K*Z=q\mu^{\f{1}{p}-1} Z^{\f{1}{p}}.
$$
Clearly, with an appropriate choice of $\mu$, we can arrange so that $Z$ indeed satisfies \eqref{140}.

We now proceed to show that $Z$ and $\Phi=Z^p$ are both $C^\infty$. First, let us see that from the equation \eqref{140}, the function $Z$ never vanishes. Indeed, assume the opposite, $Z(x_0)=0$. Then,
$$
0=A^2c^2 Z^{1/p}(x_0)=\int_{-1}^1 K(y) Z(x-y)dy.
$$
Since $K(y)>0$ in $(-1,1)$, it would follow that $Z|_{(x_0-1,x_0+1)}=0$. Iterating this argument yields that $Z(x)=0$ for all $x$, a contradiction. Now, we have from \eqref{140} that $\Phi$ is differentiable  and moreover
$$
A^2 c^2 \Phi'=K'*\Phi^p.
$$
It follows that $\Phi'$ is continuous. By induction, we can show
$$
A^2 c^2 \Phi^{(l)}=K'*(\Phi^p)^{(l-1)},
$$
and hence $\Phi^{(l)}$ is continuous, since $\Phi^{(j)},j<l$ are all continuous and $\Phi\neq 0$. For integer values of $p$, this is obvious.
Even for non-integer values of $p$, $\Phi^p$ is as smooth as $\Phi$, since $\Phi(x)>0$.

Finally, let us establish the double exponential rate of decay for the bell-shaped solutions of \eqref{50}. So, assuming $A\geq \sqrt{2}$, we have that $\Phi$  is bell-shaped and $K$ is decaying for $x>0$. We have
$$
A^2 c^2 \Phi(x)=\int_{x-1}^{x+1} K(y) \Phi^p(x-y) dy.
$$
Denote $\tilde{\Phi}:= \left(\f{2\|K\|_{L^\infty}}{A^2 c^2}\right)^{\f{1}{p-1}} \Phi$, so that
$$
\tilde{\Phi}(x)=\f{1}{2\|K\|_{L^\infty}} \int_{x-1}^{x+1} K(y) \tilde{\Phi}^p(x-y) dy.
$$
Using the bell-shapedness of both $K$ and $\Phi$  we conclude
$$
\tilde{\Phi}(x)\leq \tilde{\Phi}^p(x-1).
$$
Iterating this inequality for  $n<x$,
$$
\tilde{\Phi}(x)\leq  \tilde{\Phi}^{p^n}(x-n).
$$
which yields the double exponential rate of decay for $\tilde{\Phi}$ and hence for $\Phi$.  This completes the proof of Theorem \ref{theo:main}.


In order to corroborate the above conclusions, as well as to examine
the cases which do not adhere to the conditions of the above theorem
(both cases of anti-resonance, but with $A<\sqrt{2}$, as well
as cases where the anti-resonance condition is not satisfied), we
turn to some case example numerical computations in the next section.

\section{Numerical results}

\subsection{Anti-resonance scenario ($\mu=2\pi n$)}

When $\mu=2\pi n$, $n\in\mathbb{N}$, we obtained the expression of kernel $K_{A,c}(x)$ (as shown in Lemma 1) in the convolution equation: $$A^2 c^2 \Phi=K_{A,c}*(\Phi)^p.$$
When a solution of $\Phi$ is obtained, one can similarly find $\Psi$ from:
 \begin{equation}
 \label{260}
 \widehat{\Psi}(\xi)=   \f{1}{1+\nu-4\pi^2 c^2 \nu \xi^2}  \f{\sin^2(\pi \xi)}{ c^2 \pi^2\xi^2} \widehat{\Phi^p}(\xi)
 \end{equation} and its corresponding integral equation
 \begin{equation}
 \label{270}
 {\Psi}= M_{A,c}*{\Phi^p}
 \end{equation} where
$$
M_{A,c}(x)=\frac{1}{c^2\nu A^2}[(1-|x|)+\frac{\sin(\mu x)}{\mu}\sgn(x)]\chi_{(-1,1)}(x).
$$
Using the above expressions,
the solution for both the bead and the resonator strains
can be not only theoretically analyzed but also numerically computed.

Due to the similarity between $M_{A,c}$ and $K_{A,c}$, we will mainly discuss the properties of $K_{A,c}$ and those of $M_{A,c}$ directly follow.
Since $K_{A,c}$ is an even function and has finite support $(-1,1)$, we only need to consider its properties on $(0,1)$.
Noting  that $K'_{A,c}(0^+)=K'_{A,c}(1^-)=-A^2$, $K'_{A,c}(0^-)=K'_{A,c}(-1^+)=A^2$,
we appreciated that $K_{A,c}$ is continuous everywhere but not continuously
differentiable at $x=-1,0,1$. We now separate the different cases of
interest.

\begin{enumerate}
\item The case analyzed fully in the previous section
(i.e., the ``best case scenario''):
when $A>\sqrt{2}$, $K'_{A,c}(x)<0$ for $x\in(0,1)$ so that $K_{A,c}$ is increasing on $[-1,0]$ and decreasing on $[0,1]$. This regime
also implies the positivity of $K_{A,c}$ since $K_{A,c}(-1)=K_{A,c}(1)=0$. Due to these good features of $K_{A,c}(x)$, and as a by-product of the proof
of the previous section, our numerical computations confirm
that the system bears bell-shaped solutions for $\Phi$ and $\Psi$,
as illustrated in the typical results shown in Fig.~\ref{fig1}.

\begin{figure}[!htbp]
\begin{tabular}{cc}
\includegraphics[width=7cm]{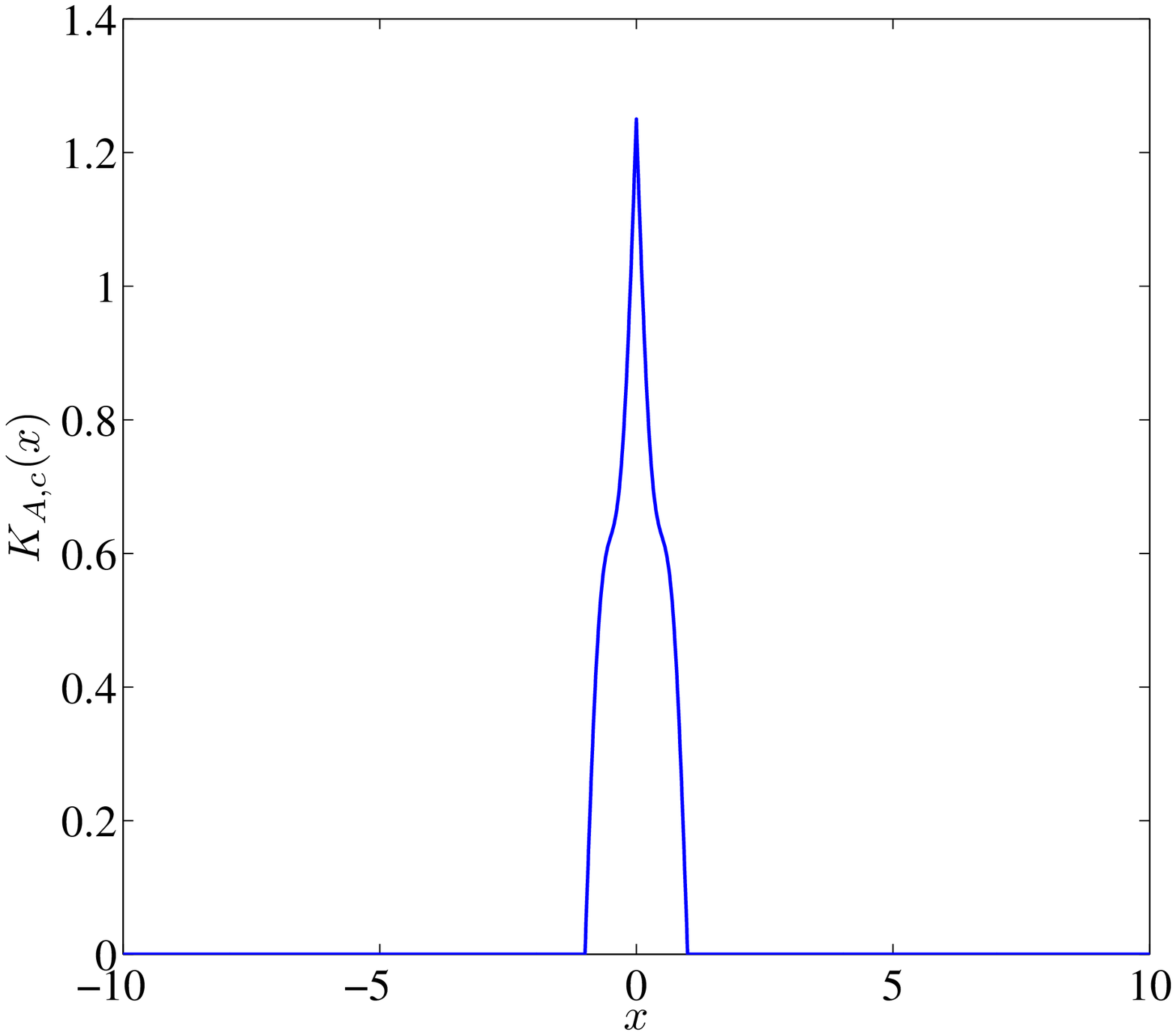}
\includegraphics[width=7cm]{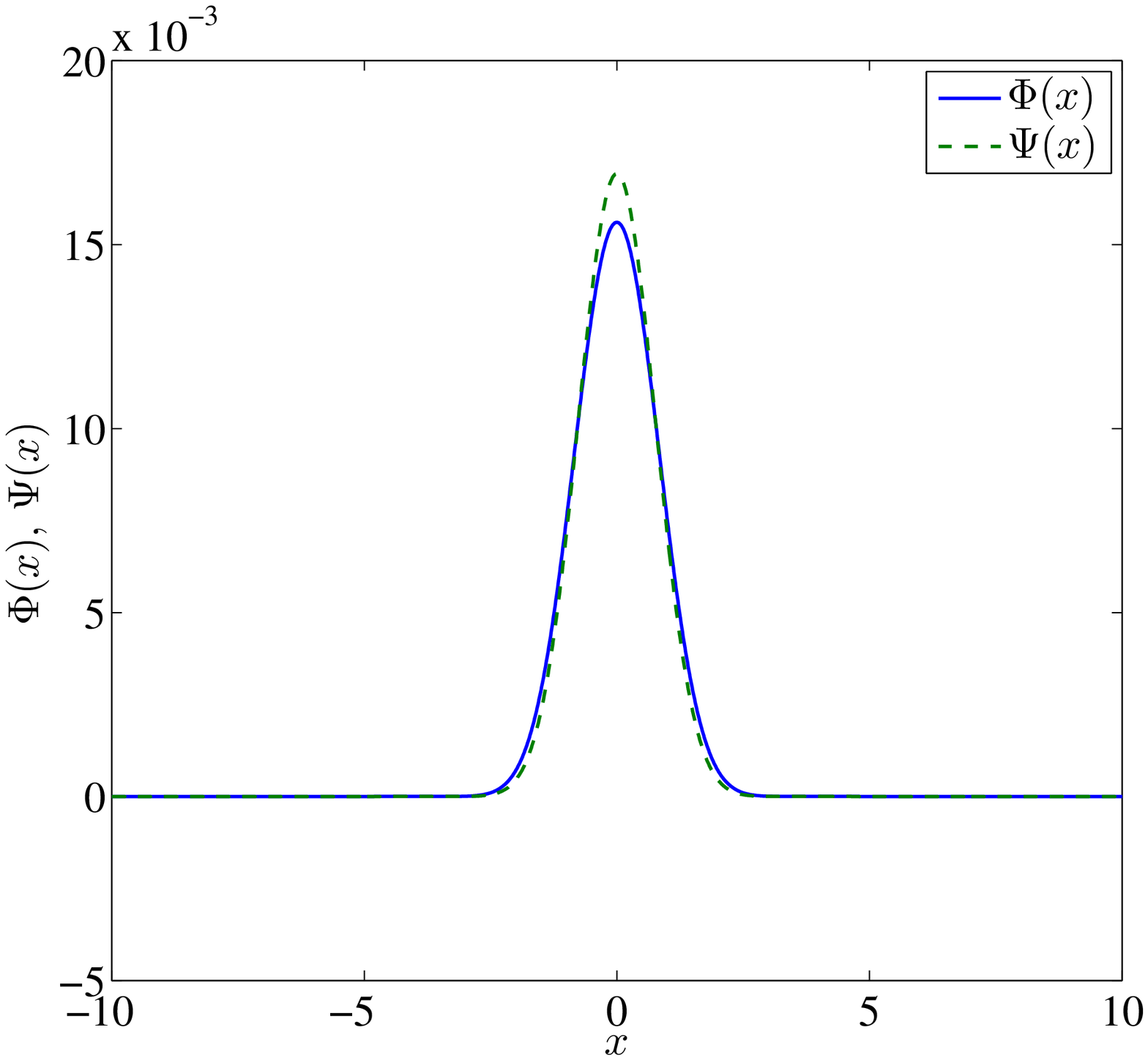}\\
\includegraphics[width=7cm]{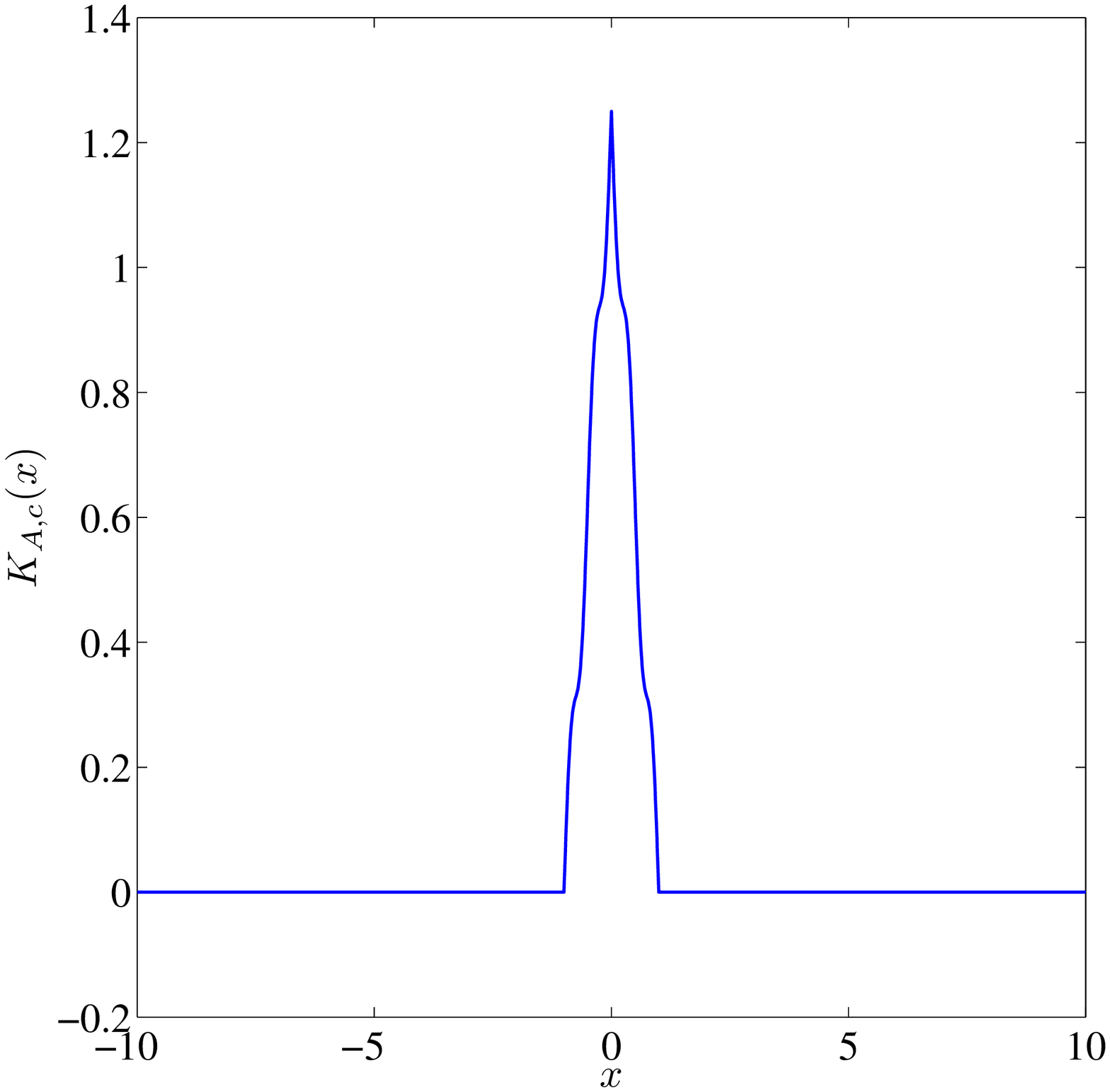}
\includegraphics[width=7cm]{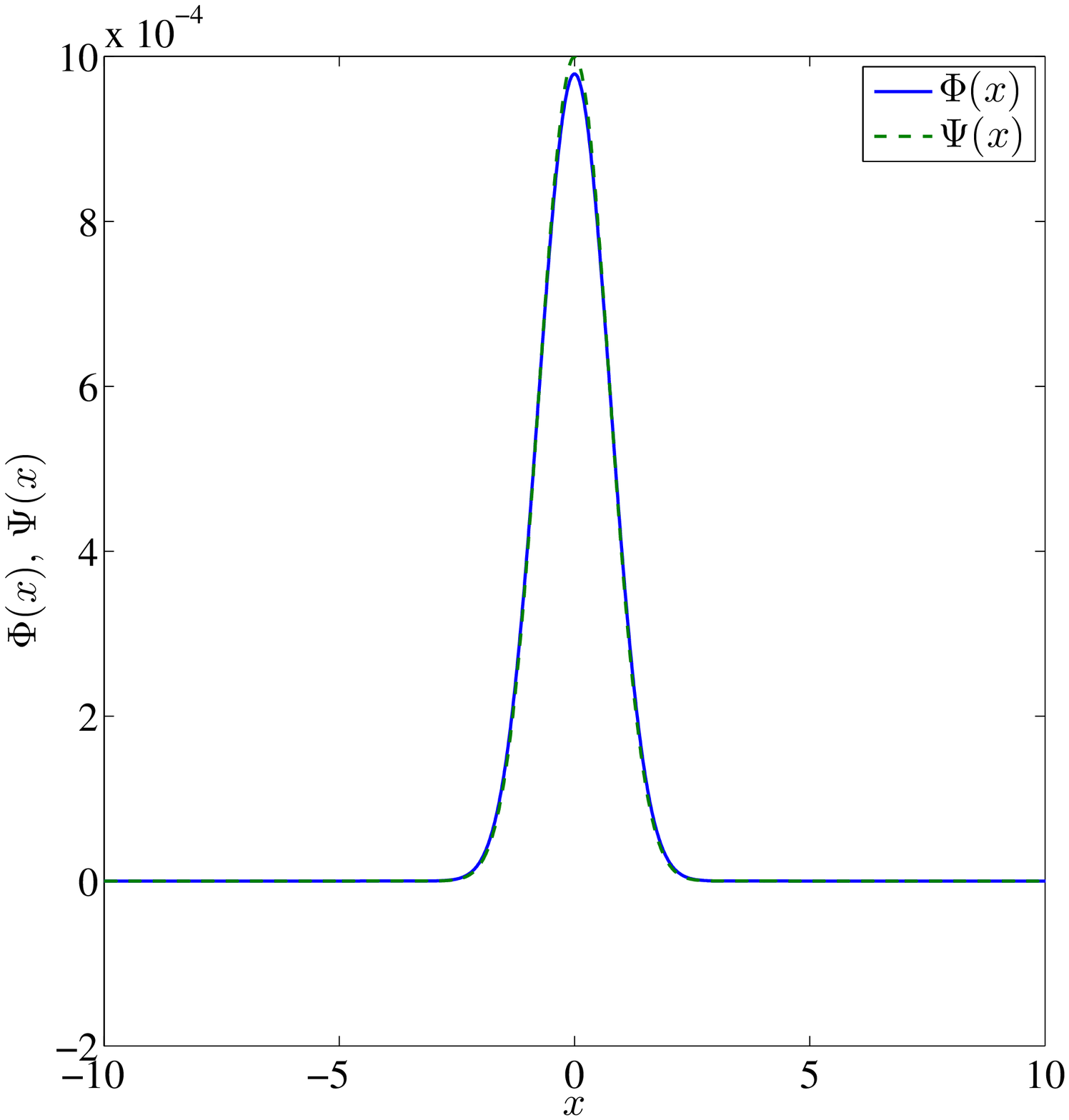}\\
\includegraphics[width=7cm]{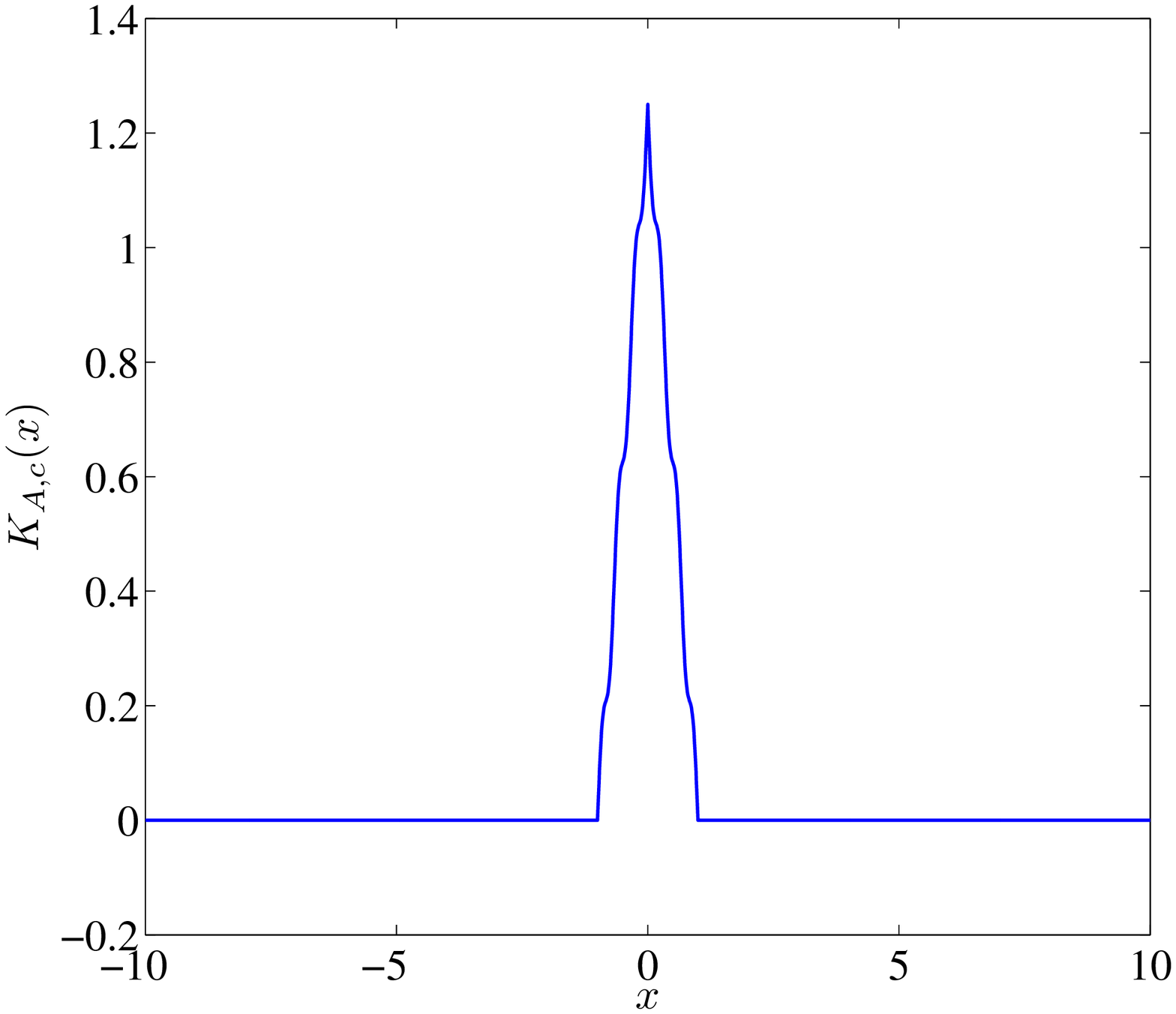}
\includegraphics[width=7cm]{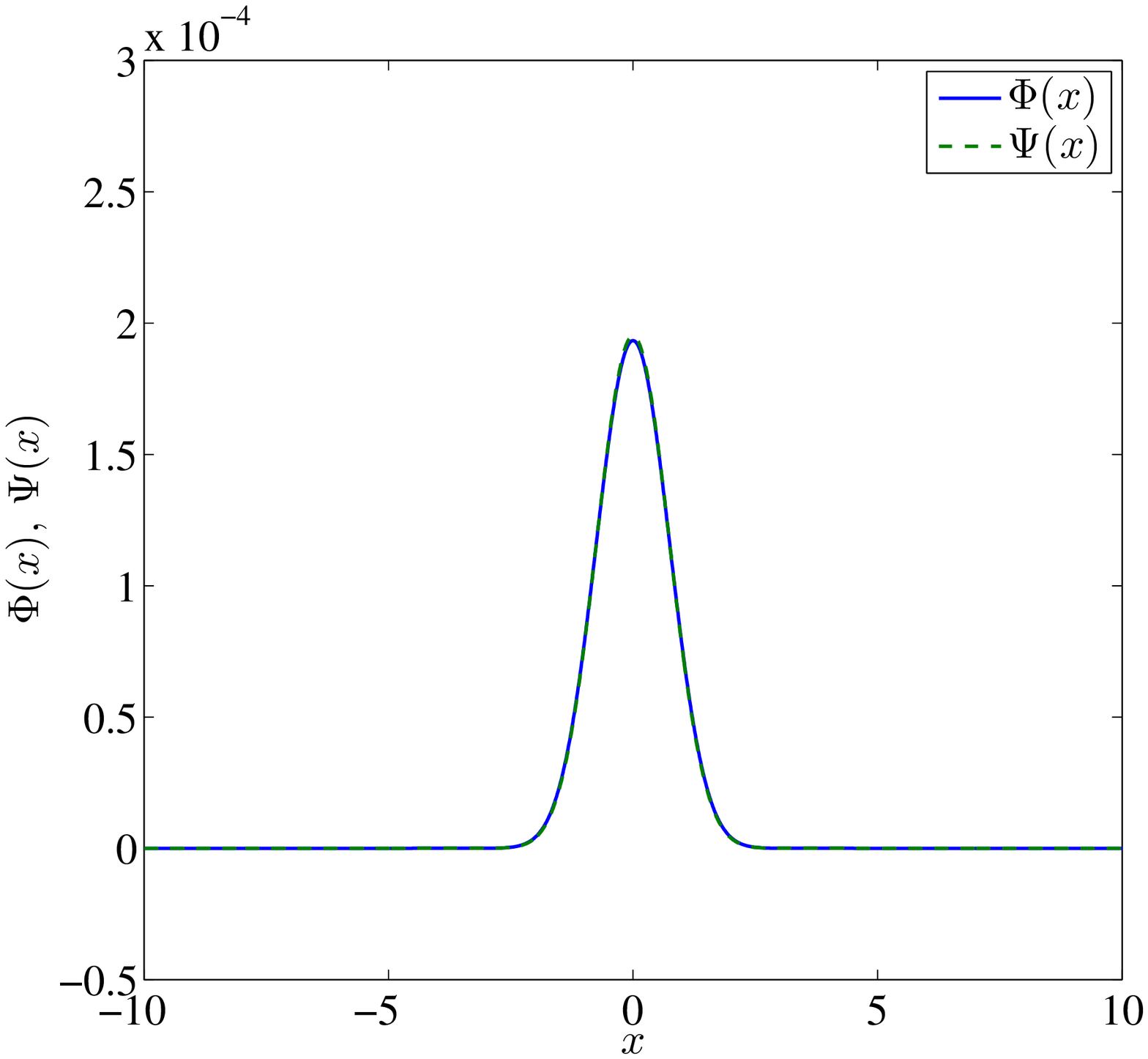}
\end{tabular}
\caption{The top panels show the kernel of the convolution equation $K_{A,c}(x)$ and corresponding solution $\Phi$ and $\Psi$ with $A=1.5$ and $\mu=2\pi$. In the middle panels and bottom panels, $\mu$ is set as $4\pi$ and $6\pi$,
respectively, corresponding to higher anti-resonances.}
\label{fig1}
\end{figure}

\item The ``intermediate'' case: when $1.10328\approx A_0<A<\sqrt{2}$, $K_{A,c}(x)\geq0$ still holds for all $x$ but the kernel is no longer
monotone on $(-1,0)$ or $(0,1)$. Although in this category
$K_{A,c}(x)$ has non-monotonic variations (with
the number of local minima being $n$), the corresponding solution of $\Phi$
still features a bell-shaped profile as it did in the previous case
(see Fig.~\ref{fig2} for such examples in the case of different
anti-resonances). The difference between the (non-monotonic)
kernel and the (monotonic) solution
is, apparently, caused by the smoothing effect of convolution.

\begin{figure}[!htbp]
\begin{tabular}{cc}
\includegraphics[width=7cm]{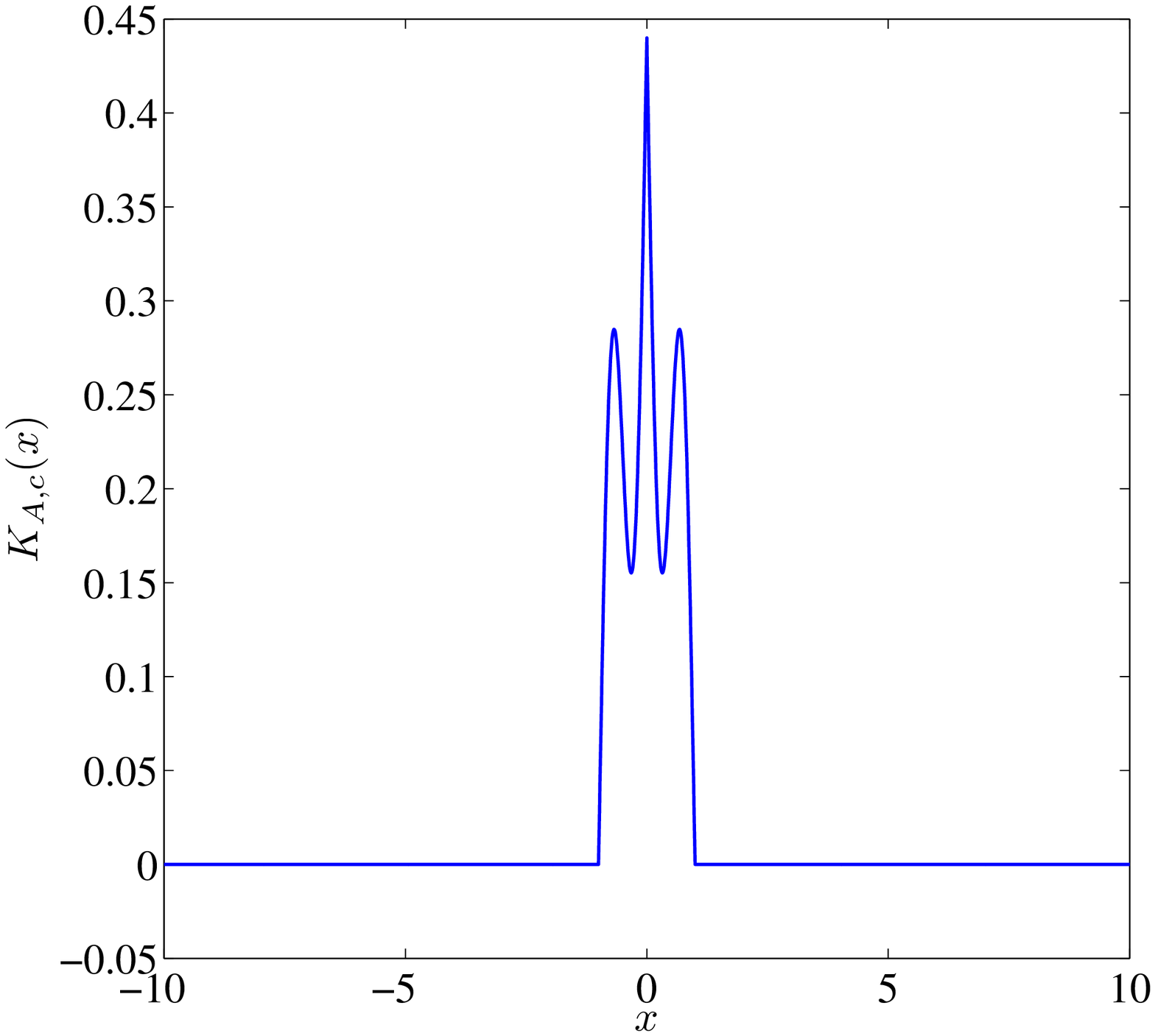}
\includegraphics[width=7cm]{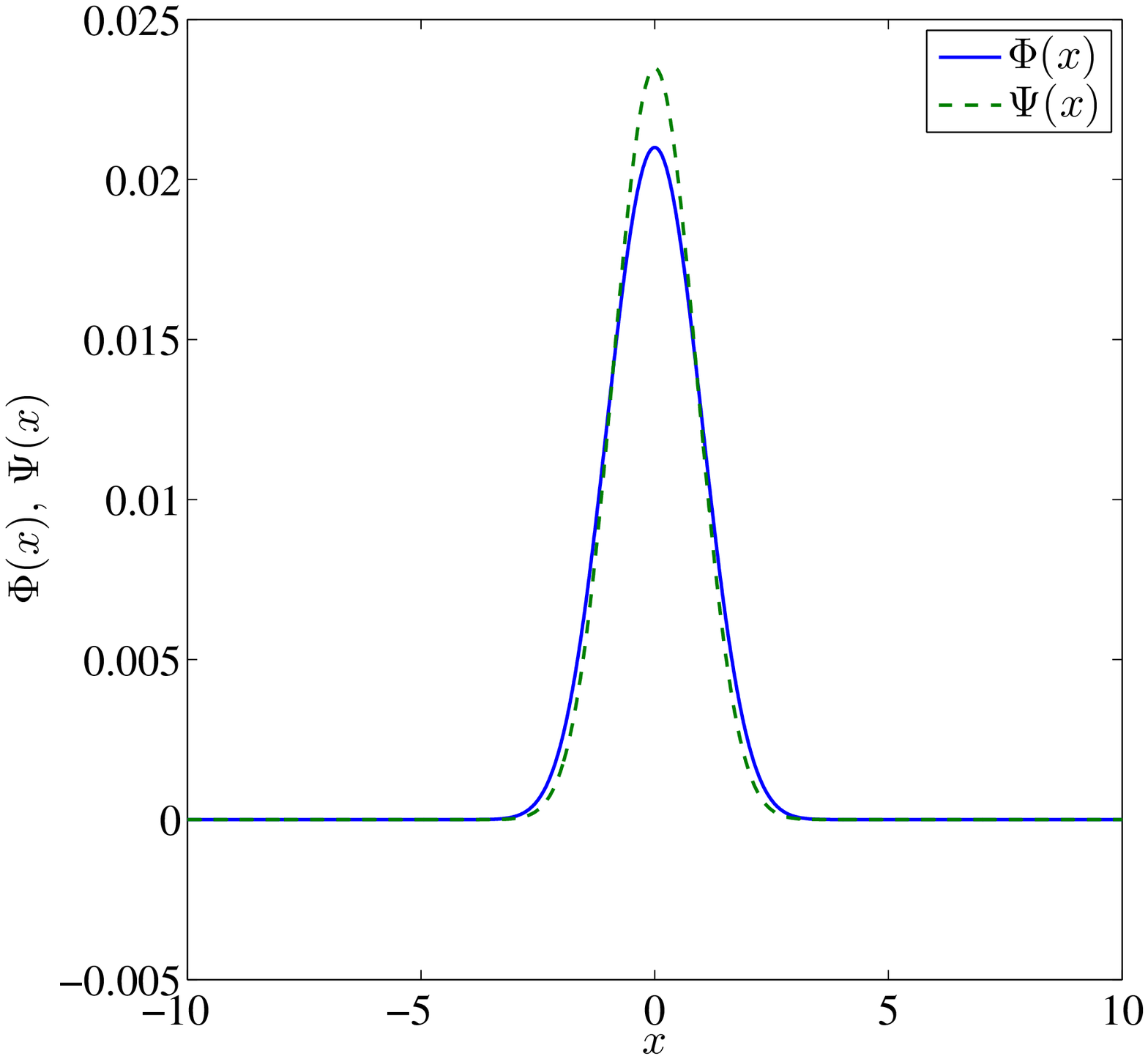}\\
\includegraphics[width=7cm]{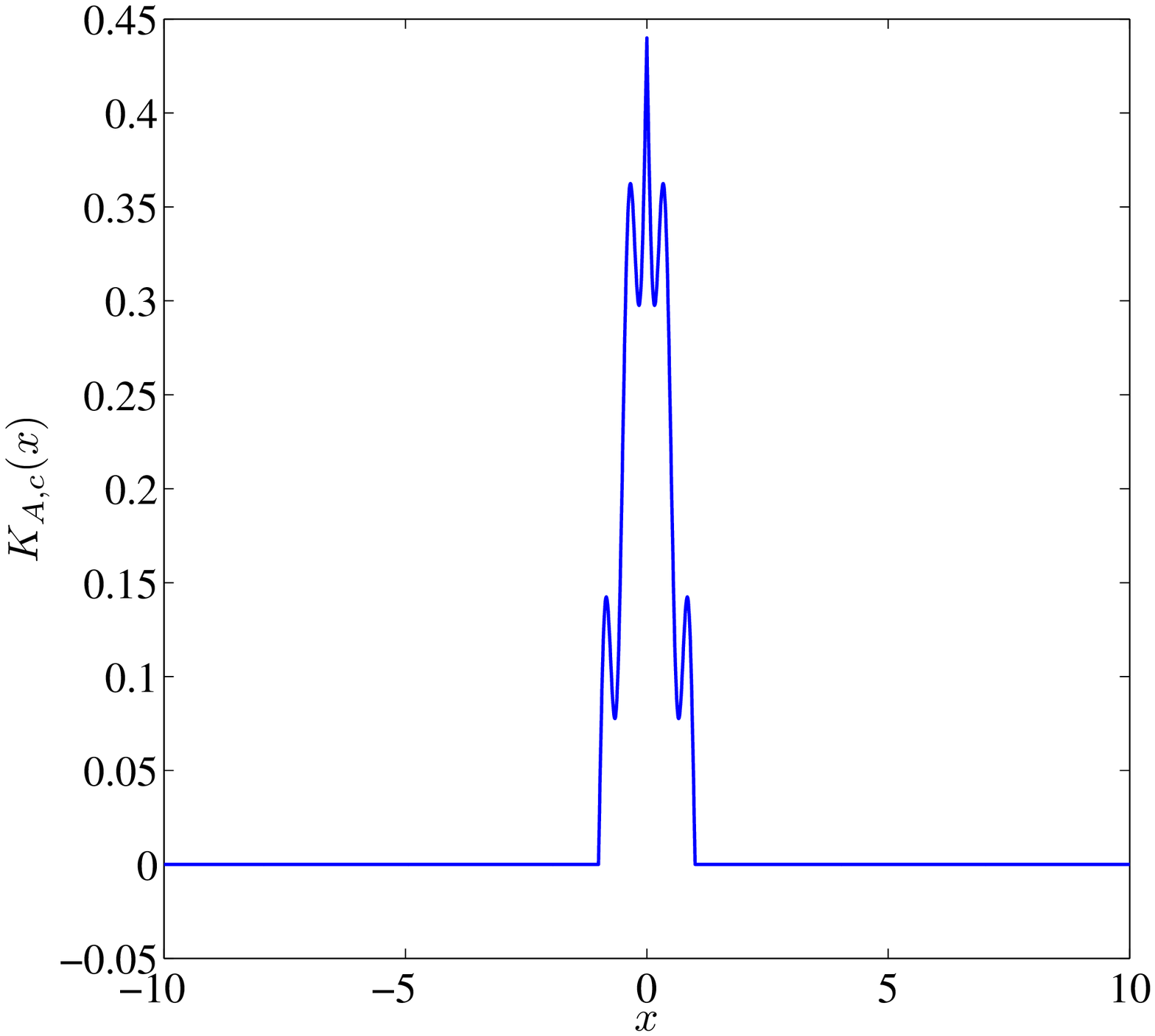}
\includegraphics[width=7cm]{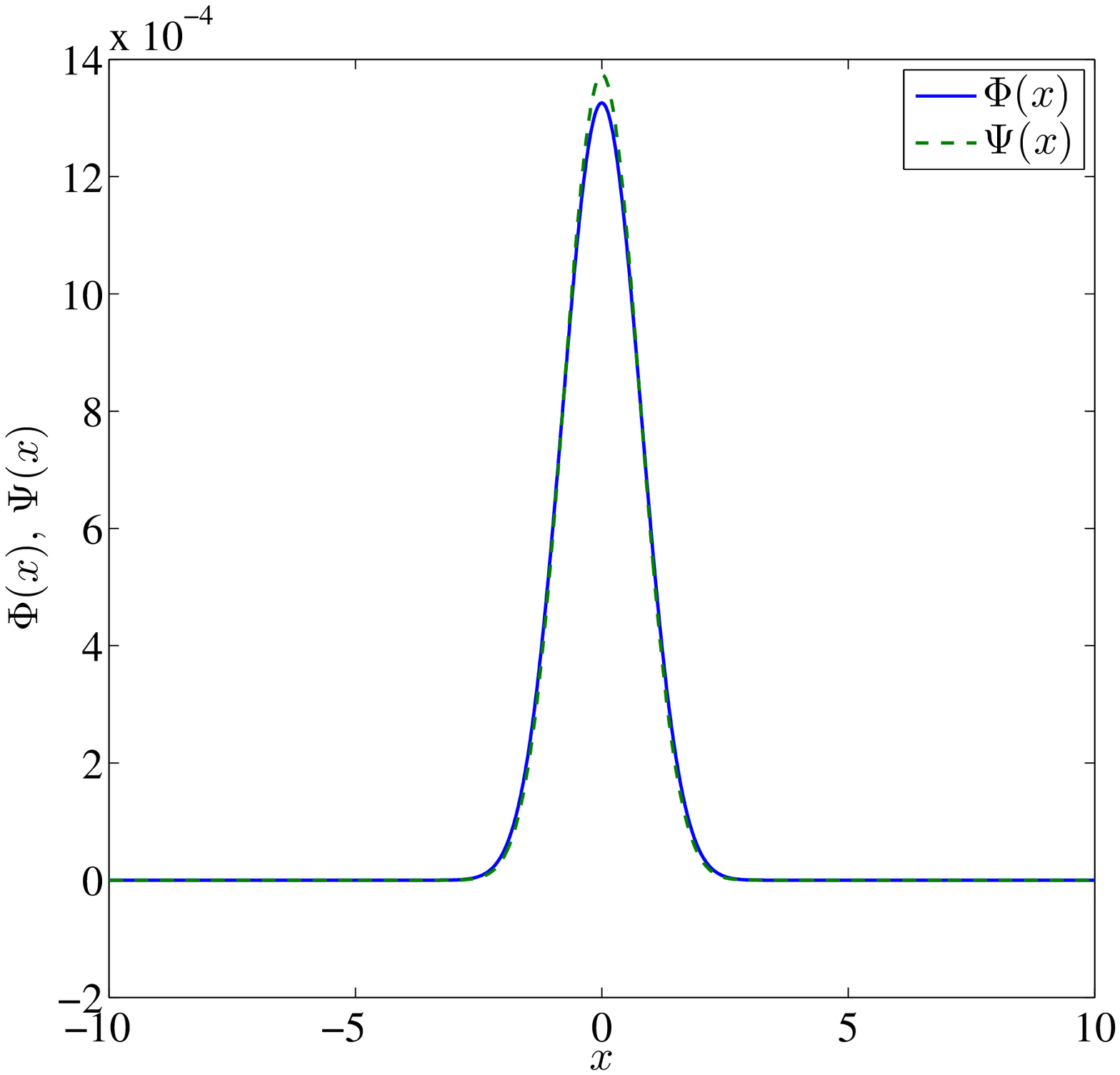}\\
\includegraphics[width=7cm]{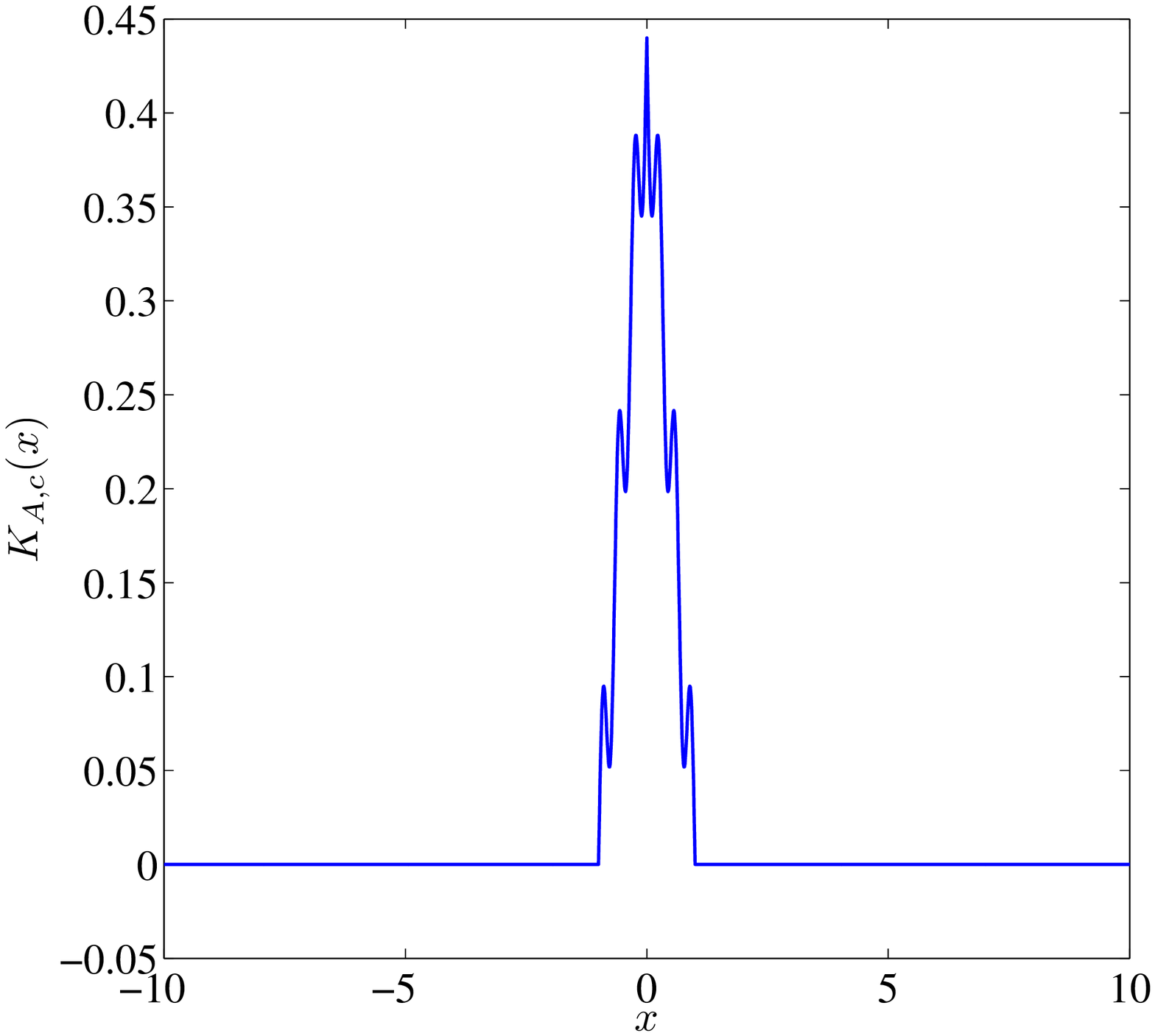}
\includegraphics[width=7cm]{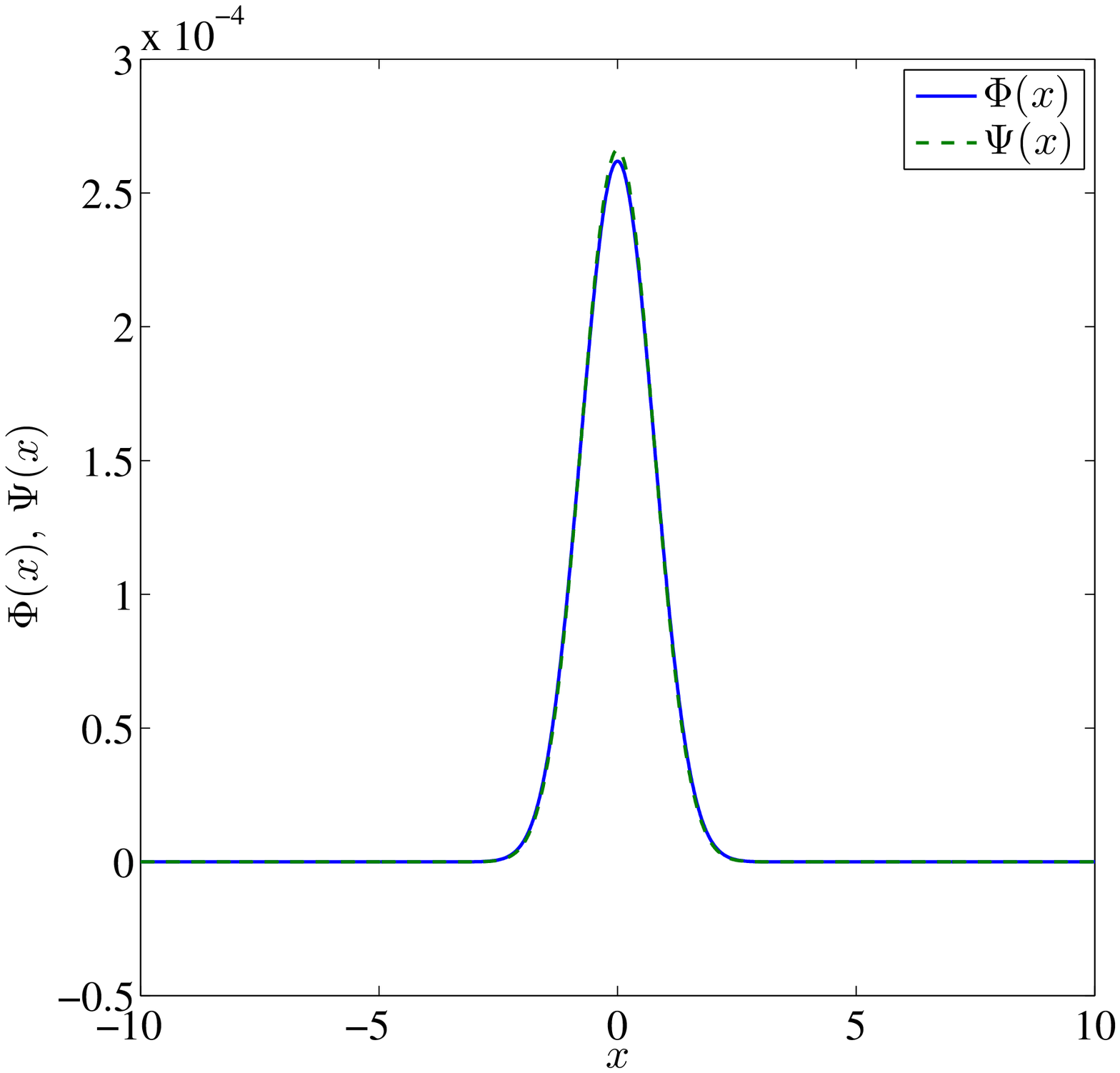}
\end{tabular}
\caption{This figure is structured similarly as Fig.~\ref{fig1} but $A$ has changed to $1.2$ instead of $1.5$ here. The top panels show the kernel of the convolution equation $K_{A,c}(x)$ and corresponding solution $\Phi$ and $\Psi$ with $A=1.2$ and $\mu=2\pi$. In the middle panels and bottom panels, $\mu$ is set as $4\pi$ and $6\pi$, respectively.}
\label{fig2}
\end{figure}

\item Turning now to cases for which our analytical arguments
can no longer be used (even in a qualitative way),
we first consider: $1<A_{1,n}<A<A_0$. Here,
$K_{A,c}(x)$ is neither necessarily positive, nor decreasing on $(0,1)$.
However, as Fig.~\ref{fig3} suggests, we notice from the numerical results
that a bell-shaped solution of $\Phi$ and $\Psi$ could still exist as long as $A$ is not too small, i.e. $A>A_{1,n}>1$. Here $A_{1,n}$ decreases over $n$
and it is the threshold above which our numerical method is able to
converge to a
solution, as shown in the right panel of Fig.~\ref{fig4}.

\begin{figure}[!htbp]
\begin{tabular}{cc}
\includegraphics[width=7cm]{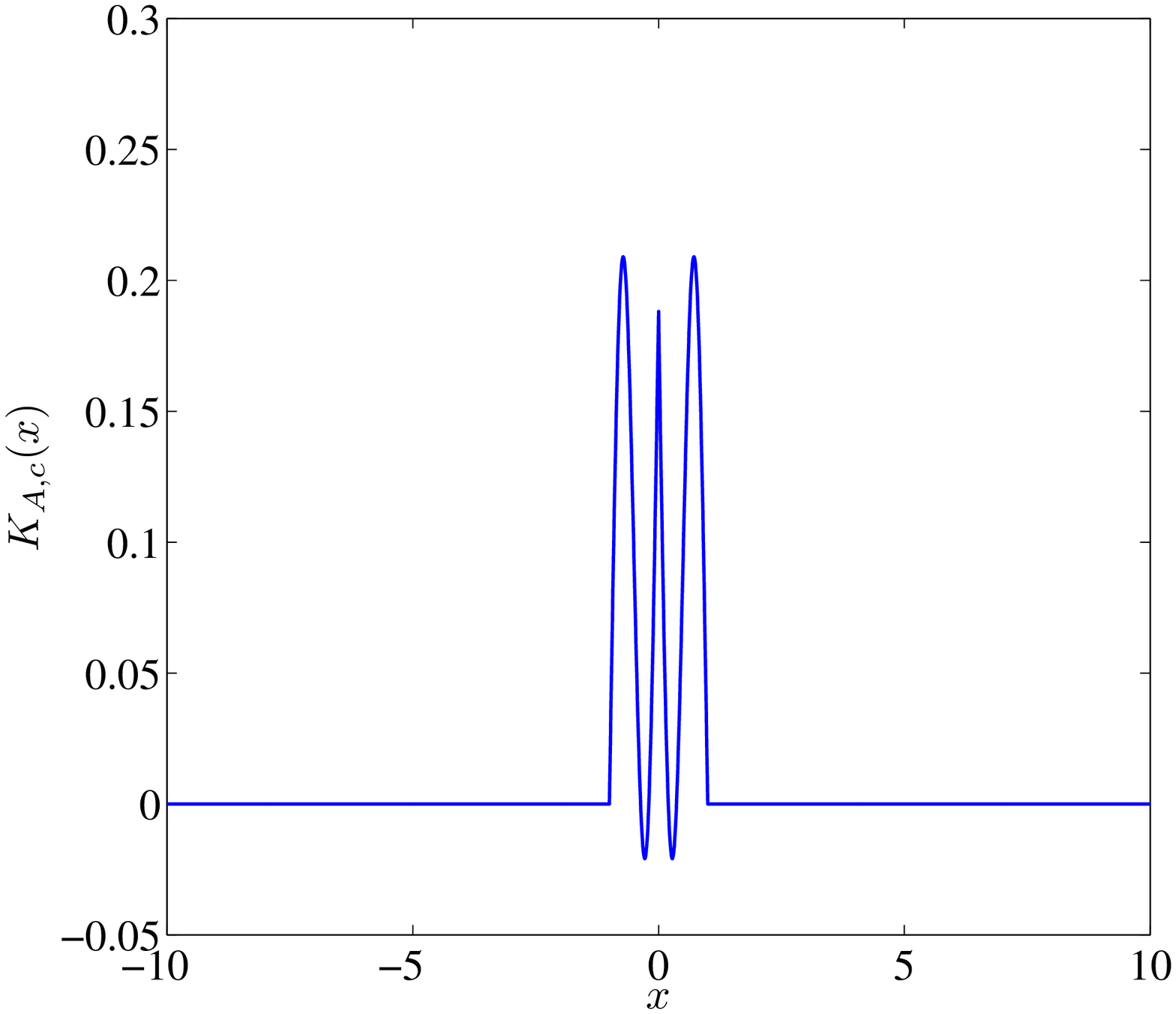}
\includegraphics[width=7cm]{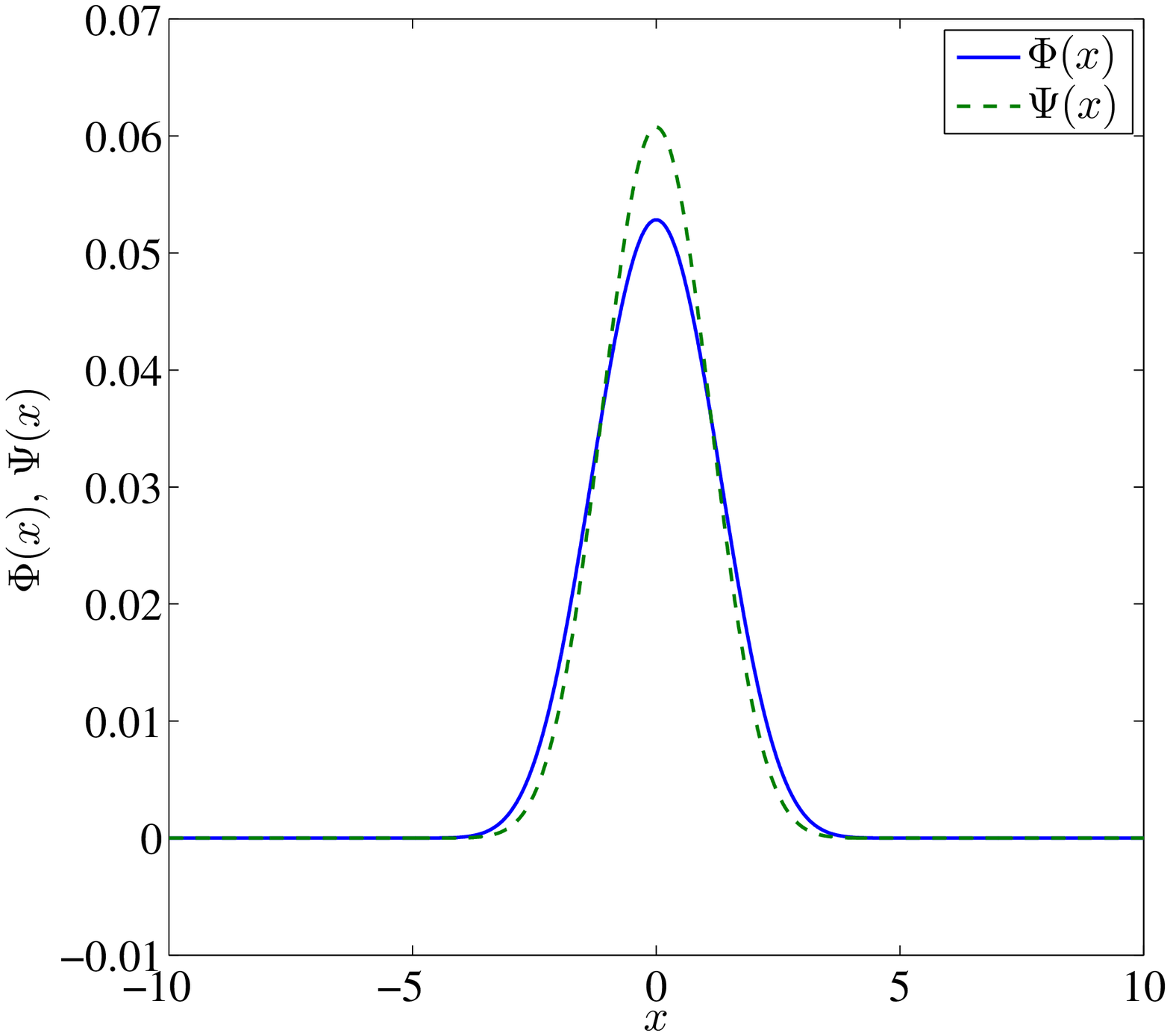}\\
\includegraphics[width=7cm]{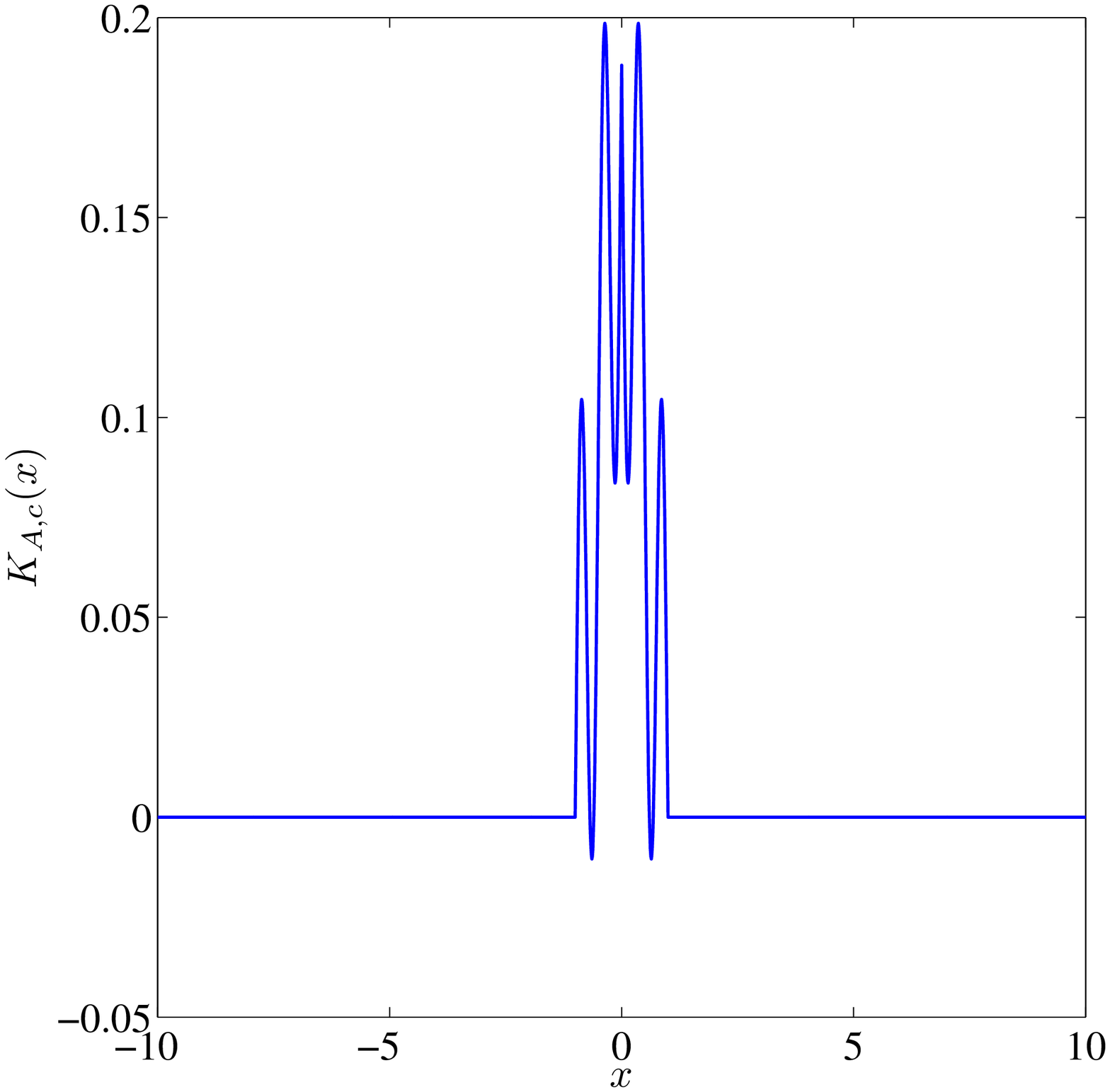}
\includegraphics[width=7cm]{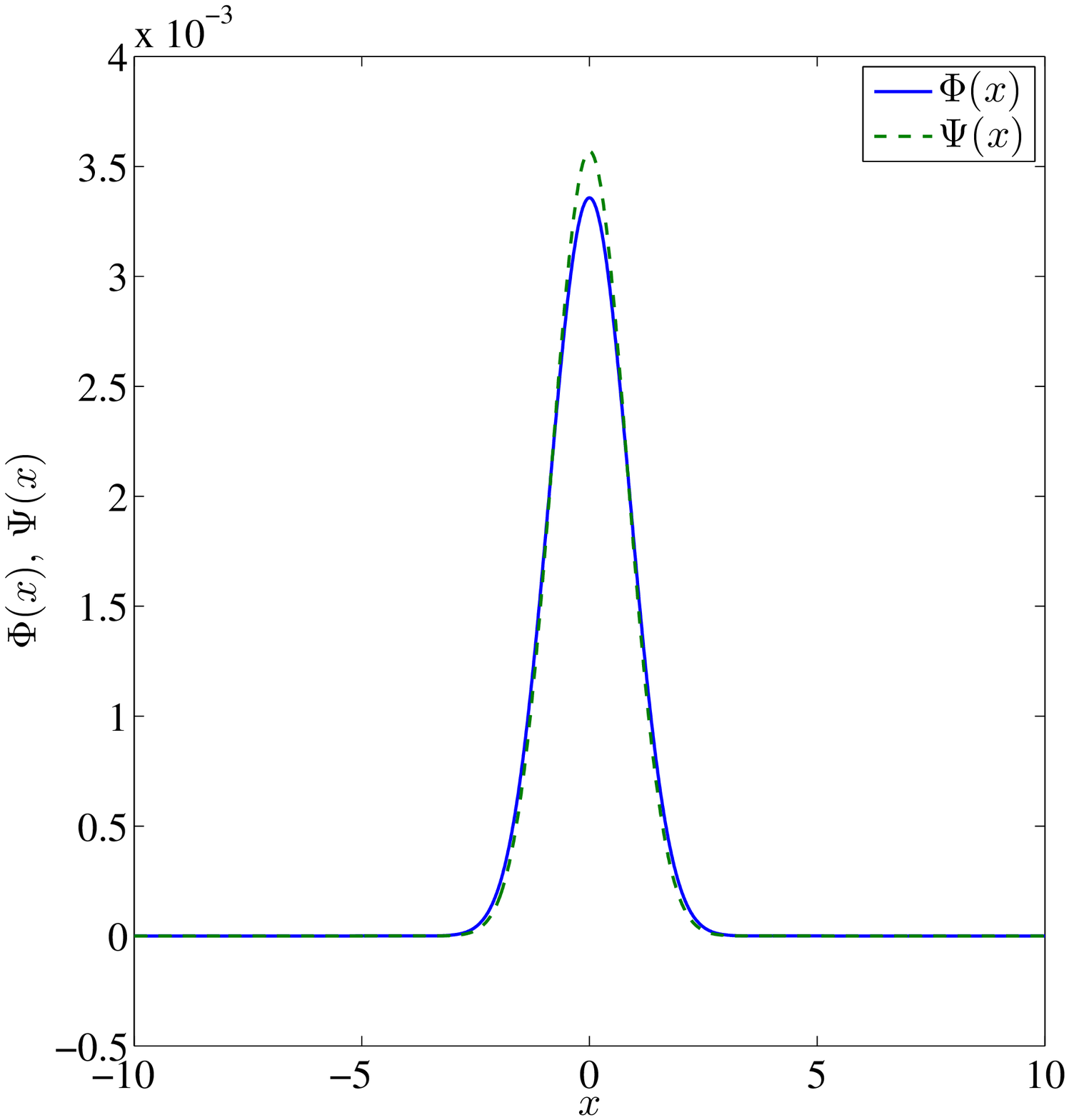}\\
\includegraphics[width=7cm]{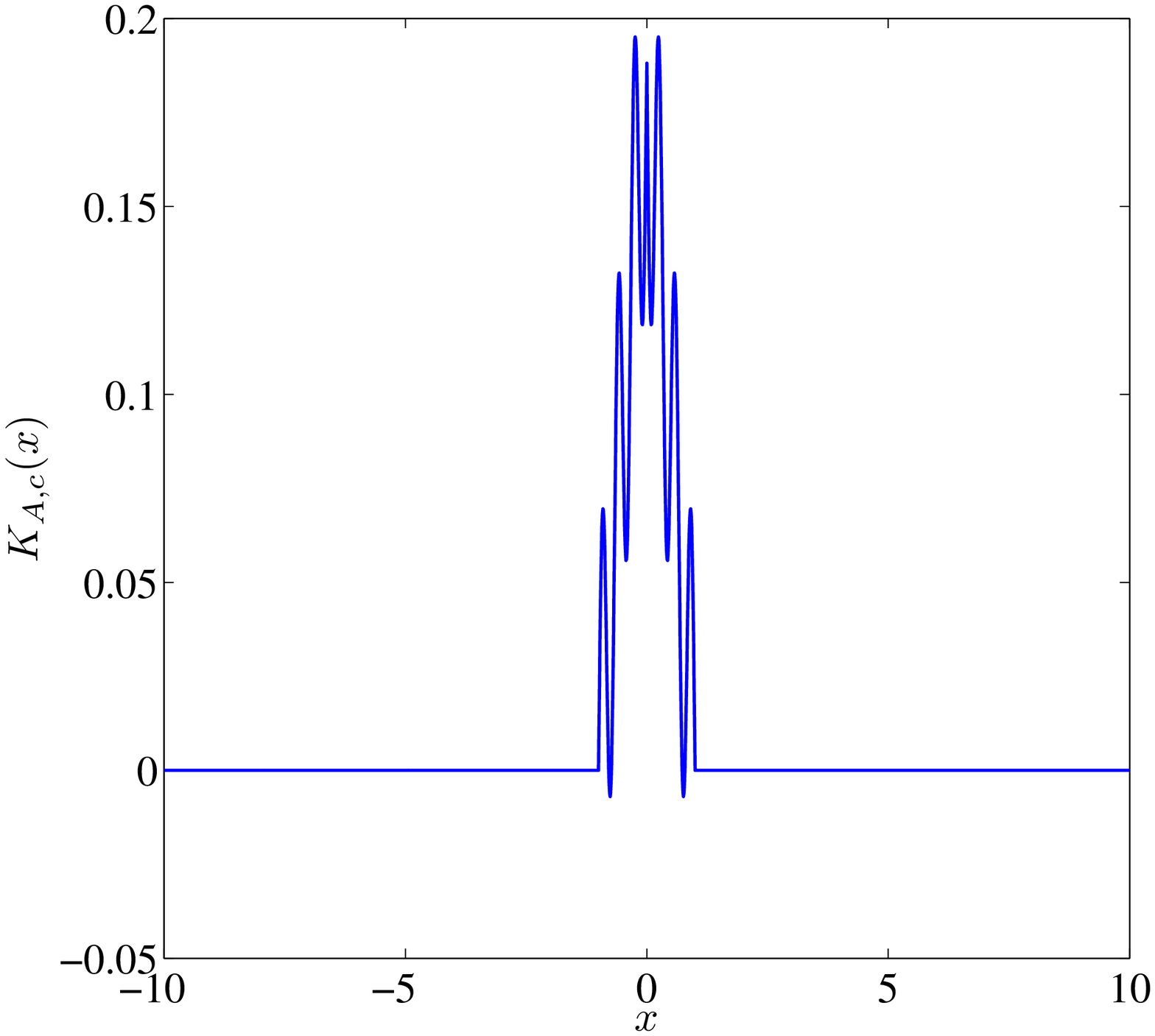}
\includegraphics[width=7cm]{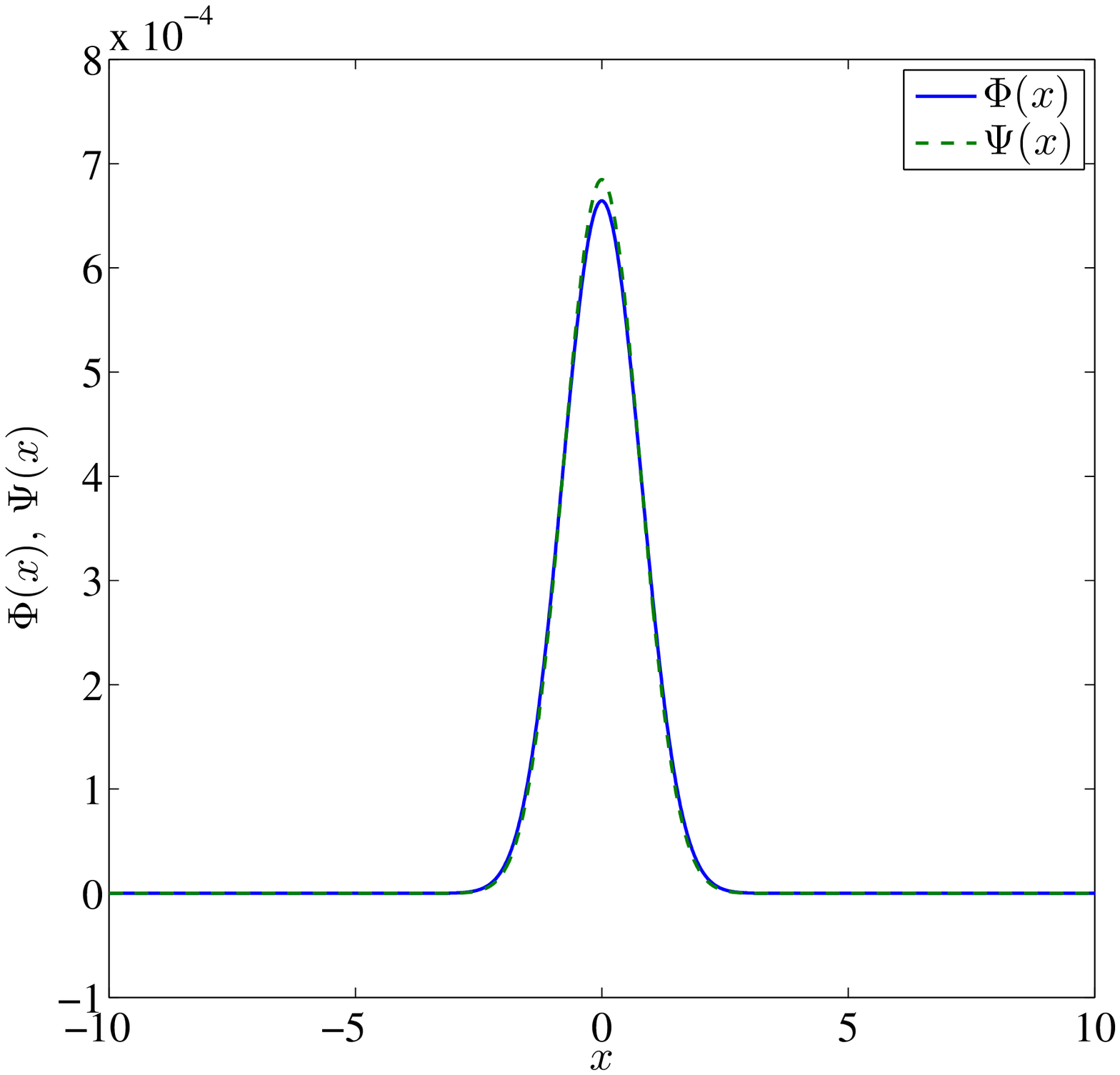}
\end{tabular}
\caption{This figure is similar to the above figures and the only difference is that $A=1.09$ here. The top panels show the kernel of the convolution equation $K_{A,c}(x)$ and corresponding solution $\Phi$ and $\Psi$ with $A=1.09$ and $\mu=2\pi$. In the middle panels and bottom panels, $\mu$ is set as $4\pi$ and $6\pi$, respectively.}
\label{fig3}
\end{figure}

\item  When $1<A<A_{1,n}$, the substantial intervals of negative
values of $K_{A,c}(x)$ lead our numerical scheme to failure of convergence
to any solution for $\Phi$ and $\Psi$. The plot of a typical
example of $K_{A,c}(x)$ within this category
is given in the left panel of Fig.~\ref{fig4}. The dependence
of $A_{1,n}$ on $n$ is shown in the right panel of the figure, as
indicated above.

\begin{figure}[!htbp]
\begin{tabular}{cc}
\includegraphics[width=7cm]{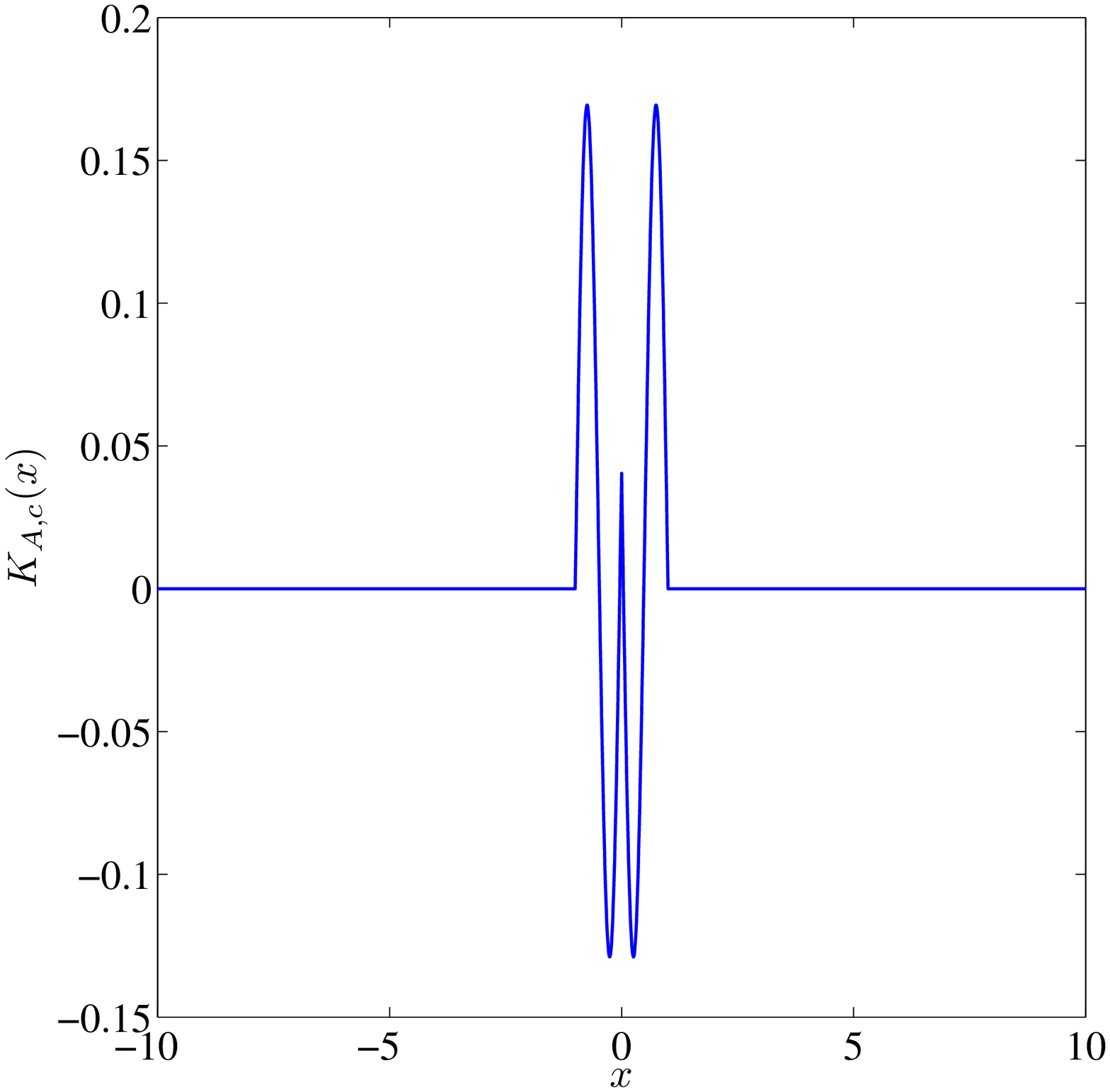}
\includegraphics[width=7cm]{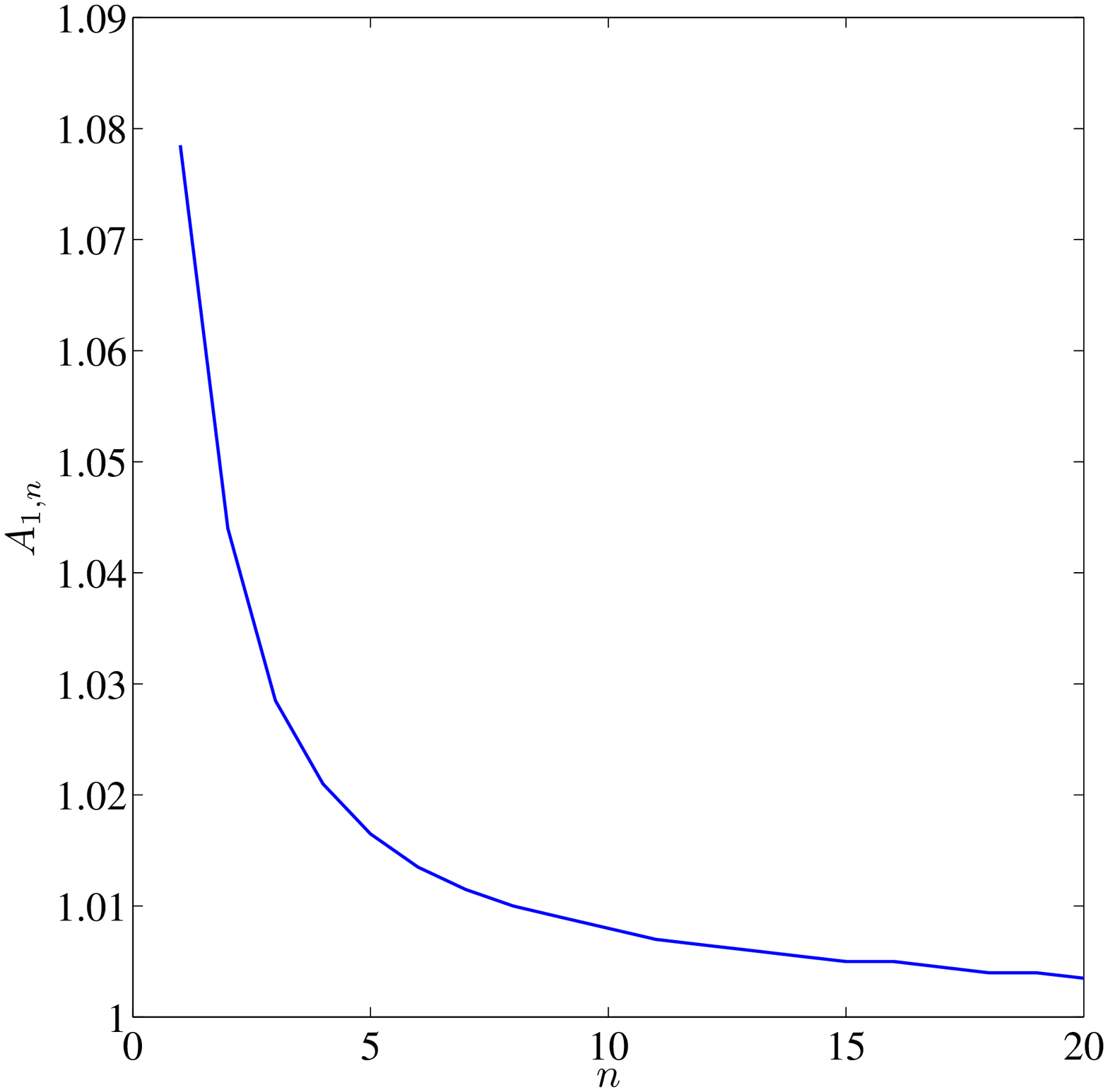}
\end{tabular}
\caption{The left panels shows the kernel of the convolution equation $K_{A,c}(x)$ with $A=1.02$ and $\mu=2\pi$ while how $A_{1,n}$ decreases over $n$ is revealed in the right panel.}
\label{fig4}
\end{figure}

\item Resonance case with $\mu\neq 2\pi n$. In this case,
the conditions for calculating the Fourier transform on $\Phi$ and $\Psi$
fail. By utilizing Fourier series instead (on a finite
computational interval), we obtain that
$K_{A,c}=(A^2-1)\max{(1-|x|,0)}+G(x)$ has non-decaying oscillatory tails on both wings. Moreover, despite the fact that $K_{A,c}$ is neither increasing nor
non-negative on $(-\infty,0)$, we are able to obtain
numerical solutions for $\Phi$ and $\Psi$ from either the convolution
equations (i.e. the integral equations) or the system of advance-delay
differential equations, as shown in Fig.~\ref{fig5}.
It is interesting that the solution also has a bell shape in the
center, but it possesses oscillatory tails on both wings.
The computations of this resonant case and their subtleties (from
a numerical perspective), as well as the different methods
utilized to obtain the solutions are analyzed in detail
elsewhere~\cite{stekev5}. The rigorous analysis which is the main emphasis of
the present work cannot, unfortunately,  presently provide
definitive insights about the latter case. This remains an
important open problem for future study.

\begin{figure}[!htbp]
\begin{tabular}{cc}
\includegraphics[width=7cm]{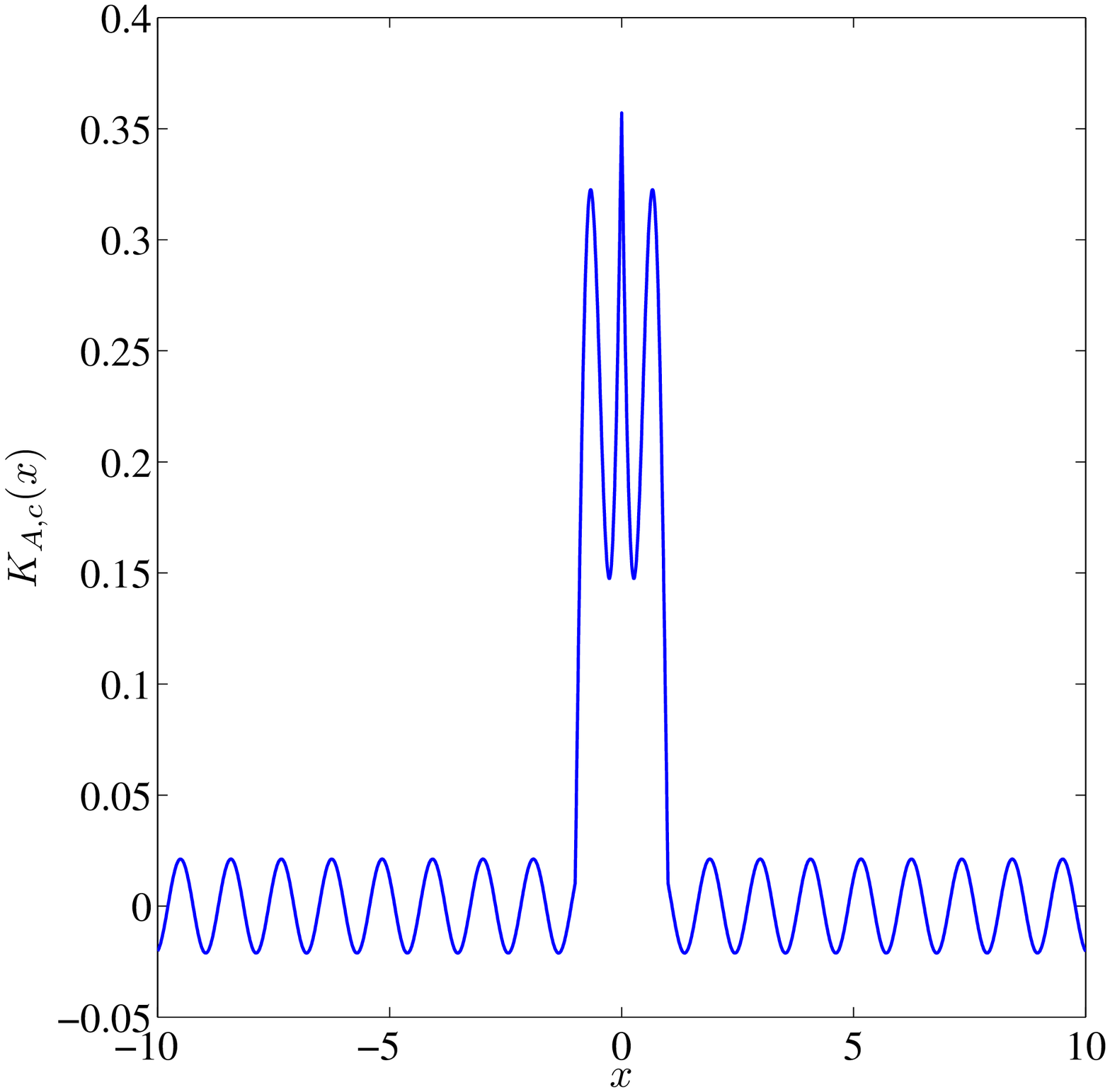}
\includegraphics[width=7cm]{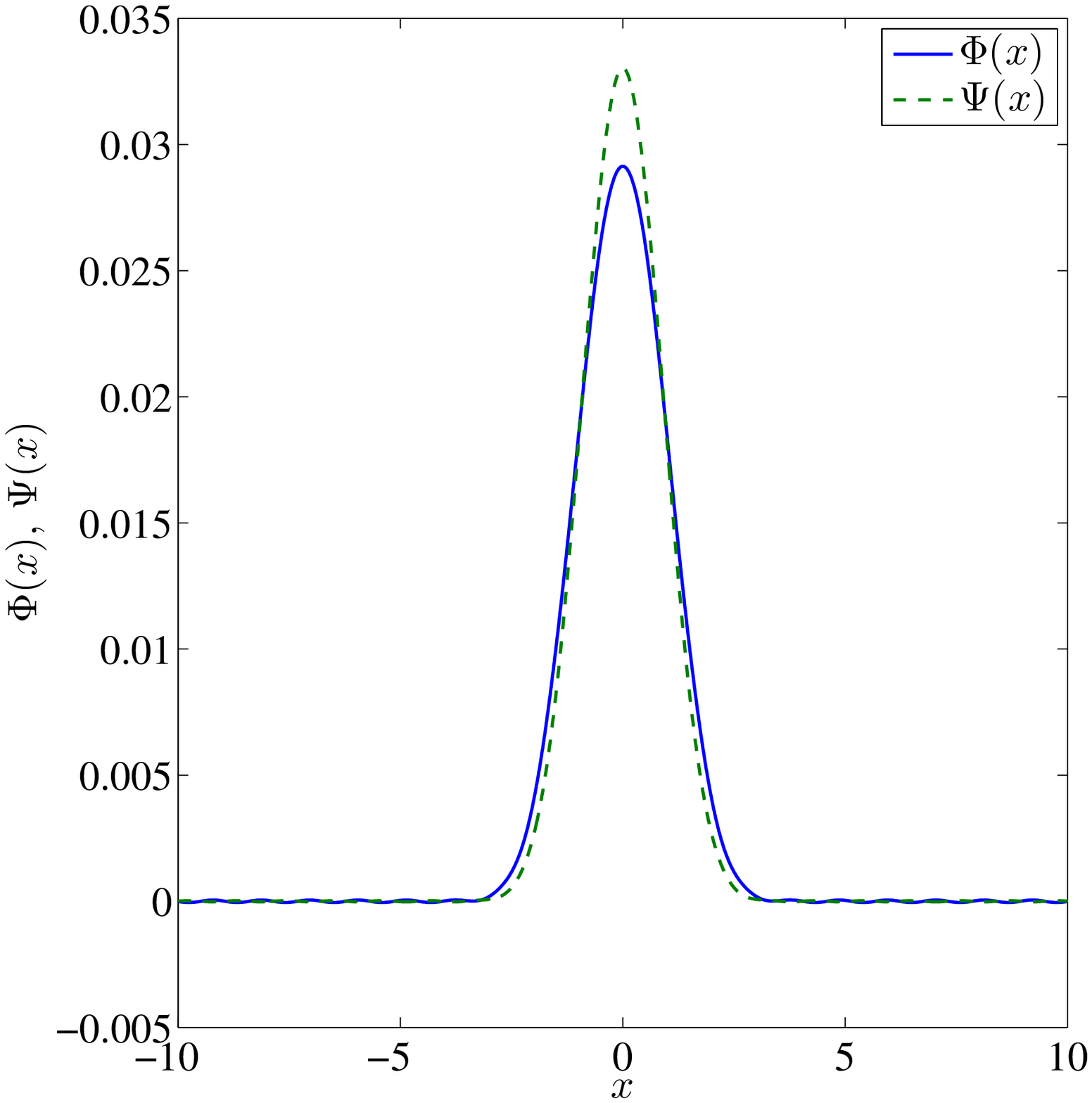}
\end{tabular}
\caption{The left panel shows the kernel of the convolution equation $K_{A,c}(x)$ and the right shows the corresponding solution $\Phi$ and $\Psi$ with $A=1.2$ and $\mu=2\pi-0.5$.}
\label{fig5}
\end{figure}

\end{enumerate}

\section{Conclusions \& Future Challenges}

In the present work, we have explored systems in the form of
Mass in Mass or Mass with Mass dynamical lattices possessing
an internal resonator, studying their traveling waves.
In particular, we could explore the case of so-called anti-resonances
whereby for a particular set of ``quantized'' relations between the
resonator mass and the wave speed (and for suitably
small resonator masses), a bell-shaped traveling wave
could be rigorously proven to exist. Interestingly,
the bell-shaped waves were numerically found to
exist when the anti-resonance condition applies, even
beyond the mass threshold for which our proof holds.
However, a secondary threshold was identified for sufficiently
large resonator masses, beyond which our iterative scheme
for the numerical identification of the waves no longer converged.
Finally, when the anti-resonance condition is not upheld, typical
findings suggest the existence of a wave of non-vanishing tail,
in line also with recent experimental observations~\cite{yang}.

Naturally, many directions of future research emerge from
the present (and recent) considerations of this novel class
of systems. From a rigorous perspective, understanding the
phenomenon of resonances and the formation of traveling
waves with tails would be especially interesting. Equally
interesting from an experimental perspective appears to be
the actual experimental setting whereby an initial excitation
of the chain, by construction, produces such tails which are
arising only on one wing but not the other. Additionally,
as indicated in~\cite{yang}, a distinct experimental possibility
is that of bearing multiple resonators rather than one in
the context of the woodpile configuration. Understanding
how one vs. more such resonators may affect the observed
phenomenology is another important direction for both
theoretical and experimental future studies.



\begin{thebibliography}{100}


\bibitem{pikovsky}
K.~Ahnert and A.~Pikovsky,
\emph{Compactons and chaos in strongly nonlinear
  lattices},  Phys. Rev. E \textbf{79}, 026209 (2009).


\bibitem{Nature11}
N.~Boechler, G.~Theocharis, and C.~Daraio,
\emph{Bifurcation-based acoustic
  switching and rectification},
Nature Mat. \textbf{10}, 665--668  (2011).


\bibitem{luca} L. Bonanomi, G. Theocharis, C. Daraio, \emph{Locally Resonant Granular Chain},
arXiv:1403.1052.

\bibitem{BLM} H. J. Brascamp, Elliott H. Lieb, and J. M. Luttinger, \emph{A general rearrangement
inequality for multiple integrals,} {\em J. Funct. Anal.} {\bf 17} (1974), 227–237.

\bibitem{campbellchaos}  D.K. Campbell, P. Rosenau and G.M. Zaslavsky,
Chaos {\bf 15}, 015101 (2005);
G. Galavotti (Ed.) {\it  The Fermi–Pasta–Ulam Problem: A Status Report},
Springer-Verlag (New York, 2008).


\bibitem{chatter}
A.~Chatterjee, \emph{Asymptotic solution for solitary waves in a chain of
  elastic sphereres},  Phys. Rev. E \textbf{59}, 5912--5919  (1998).


\bibitem{dar05}
C.~Daraio, V.~F. Nesterenko, E.~B. Herbold, and S.~Jin,
\emph{Strongly
  nonlinear waves in a chain of {Teflon} beads},  Phys. Rev. E \textbf{72}, 016603
  (2005).



\bibitem{pego} J.M. English and R.L. Pego, \emph{On the solitary wave pulse in a chain of beads},
Proceedings of the AMS {\bf 133}, 1763--1768  (2005).



\bibitem{FPU} E. Fermi, J. Pasta, S. Ulam,
{\it Studies of Nonlinear Problems}, LA-UR 1940.


\bibitem{fw} G. Friesecke and J.A.D. Wattis, \emph{Existence theorem for solitary waves on lattices}, Comm. Math. Phys.
{\bf 161}, 391--418  (1994).


\bibitem{gatz} G. Gantzounis, M. Serra-Garcia, K. Homma, J.M. Mendoza,
C. Daraio,  \emph{Granular metamaterials for vibration mitigation}, J. Appl. Phys. {\bf 114}, 093514 (2013).


\bibitem{mertens} D. Hochstrasser, F.G. Mertens and H. B{\"u}ttner, \emph{An iterative method for the calculation of narrow solitary excitations on atomic chains}, Physica D {\bf 35}, 259--266  (1989).


\bibitem{ref28}
G. James,
\emph{Nonlinear waves in Newton's cradle and the discrete p-
Schr\"{o}dinger equation,} Math. Mod. Meth. Appl. Asci. {\bf 21}, 2335--2377 (2011).

\bibitem{ref29}
G. James, P. G. Kevrekidis, J. Cuevas, \emph{Breathers in oscillator chains
with Hertzian interactions},  Physica D {\bf 251}, 39--59 (2013).

\bibitem{pgk_review} P.G. Kevrekidis,
\emph{Non-linear waves in lattices: past, present, future},
IMA Journal of Applied Mathematics, 389-423  (2011).

\bibitem{annavain} P. G. Kevrekidis,
A. Vainchtein, M. Serra Garcia, and C. Daraio, \emph{Interaction of traveling waves with mass-with-mass defects within a Hertzian chain}
Phys. Rev. E {\bf 87}, 042911 (2013).


\bibitem{dev08} D. Khatri, C. Daraio, and P. Rizzo,
SPIE {\bf 6934}, 69340U (2008).



\bibitem{yang} E. Kim, F. Li, C. Chong, G. Theocharis, J. Yang,
P.G. Kevrekidis, \emph{Highly nonlinear wave propagation in elastic woodpile periodic
structures},  arXiv:1411.7062.



\bibitem{yang2} E. Kim, J. Yang,
\emph{Wave propagation in single column woodpile phononic
crystals: formation of tunable band gaps}, 
J. Mech. Phys. Solids {\bf 71}, 33--45
(2014).





\bibitem{Feng_transistor2014} F. Li, P. Anzel, J. Yang, P. G. Kevrekidis, and C. Daraio,
\emph{Granular Acoustic Transistors and Logic Gates},
Nature Communications {\bf 5}, 5311 (2014).


\bibitem{china2}
Xue-Feng Li, Xu~Ni, Liang Feng, Ming-Hui Lu, Cheng He, and Yan-Feng Chen,
  \emph{Tunable unidirectional sound propagation through a sonic-crystal-based
  acoustic diode},  Phys. Rev. Lett. \textbf{106}, 084301 (2011).

\bibitem{china1}
B.~Liang, B.~Yuan, and J.C. Cheng,
\emph{Acoustic diode: Rectification of
  acoustic energy flux in one-dimensional systems},
Phys. Rev. Lett.   \textbf{103}, 104301 (2009).



\bibitem{Lions} P. L. Lions,  {\emph The concentration-compactness principle in the calculus of variations.
The locally compact case. Part I and Part II.} {\em  Ann. Inst. Henri Poincare Sect. A}  (N.S.) 1, (1984).
109-145, 223-283.



\bibitem{mackay} R.S. MacKay, \emph{Solitary waves in a chain of beads under Hertz contact},
Phys. Lett. A {\bf 251}, 191--192 (1999).


\bibitem{Nesterenko2001}
V.F. Nesterenko, {\it Dynamics of heterogeneous materials}, Springer-Verlag,
  New York, 2001.


\bibitem{Nesterenko2005}
V.~F. Nesterenko, C.~Daraio, E.~B. Herbold, and S.~Jin,
\emph{Anomalous wave
  reflection at the interface of two strongly nonlinear granular media},  Phys. Rev. Lett. \textbf{95}, 158702  (2005).

\bibitem{Nesterenko1983}	V.F. Nesterenko,
\emph{Propagation of nonlinear compression pulses in granular media},
J. Appl. Mech. Tech. Phys. {\bf 24}, 733-743 (1983).



\bibitem{Sen2008}	S. Sen, J. Hong, J. Bang,  E. Avalos, R. Doney,
\emph{Solitary waves in the granular chain},
Phys. Rep. {\bf 462}, 21-66 (2008).



\bibitem{Spadoni}
A.~Spadoni and C.~Daraio,
\emph{Generation and control of sound bullets with a
  nonlinear acoustic lens},   PNAS {\bf 107}, 7230--7234 (2010).





 \bibitem{SteKev} A. Stefanov, P. Kevrekidis,
\emph{On the existence of solitary traveling waves for generalized Hertzian chains},
{Journal of Nonlinear Science},  {\bf 22}, 327--349 (2012).

  \bibitem{SteKev2}  A. Stefanov, P. G. Kevrekidis,
\emph{Traveling waves for monomer chains with pre-compression},
{Nonlinearity}, {\bf 26},  539--564 (2013).








\bibitem{Theocharis_rev} G. Theocharis, N. Boechler, and C. Daraio,
\emph{Nonlinear Phononic Periodic Structures and Granular Crystals}
in {\it Phononic Crystals and Metamaterials}, Ch. 6, Springer Verlag,
New York, 2013.

















\bibitem{stekev5} H. Xu, P.G. Kevrekidis, A. Stefanov,
\emph{Traveling Waves and their Tails in Locally Resonant Granular System}, arXiv:1412.5462.




















\end{thebibliography}
\end{document}